\documentclass[envcountsame]{llncs}

\usepackage{amsmath}
\usepackage{amssymb}
\usepackage{graphicx}
\usepackage{url}
\usepackage{algpseudocode}
\usepackage{algorithm}
\usepackage{comment}
\pagestyle{plain}

\urldef{\mailsa}\path|{aravind, subruk, cs14resch01002}@iith.ac.in|
\urldef{\mailsb}\path|juho.lauri@tut.fi|

\usepackage{xspace}
\usepackage{framed}

\newcommand{\ProblemFormat}[1]{{\sc #1}}
\newcommand{\ProblemName}[1]{\ProblemFormat{#1}\xspace}

\newcommand{\probHappyEdges}{\ProblemName{Maximum Happy Edges}}
\newcommand{\probWHappyEdges}{\ProblemName{Weighted Maximum Happy Edges}}
\newcommand{\probHappyVertices}{\ProblemName{Maximum Happy Vertices}}
\newcommand{\probWHappyVertices}{\ProblemName{Weighted Maximum Happy Vertices}}
\newcommand{\probkWMHV}{\ProblemName{Weighted $k$-MHV}}
\newcommand{\probkWMHE}{\ProblemName{Weighted $k$-MHE}}
\newcommand{\probkMHV}{\ProblemName{$k$-MHV}}
\newcommand{\probkMHE}{\ProblemName{$k$-MHE}}
\newcommand{\probMultiwayCut}{\ProblemName{Multiway Cut}}
\newcommand{\probMultiwayUncut}{\ProblemName{Multiway Uncut}}

\newcommand{\probMaxWeightedPartition}{\ProblemName{Max Weighted Partition}}

\newcommand{\eps}{\varepsilon}

\newcommand{\heading}[1]{\medskip\noindent{\bf #1.\ }}%

\let\doendproof\endproof
\renewcommand\endproof{~\hfill\qed\doendproof}

\newtheorem{rrule}{Rule}
\spnewtheorem*{theorem*}{Theorem}{\bfseries}{\rmfamily}

\usepackage{chngcntr}

\spnewtheorem{subclaim}{Claim}{\bfseries}{\itshape}
\counterwithin{subclaim}{theorem}

\spnewtheorem{conj}[theorem]{Conjecture}{\bfseries}{\itshape}
\spnewtheorem{prob}[theorem]{Problem}{\bfseries}{\itshape}

\newenvironment{subproof}[1][\proofname]{%
  \renewcommand\endproof{~\hfill\ensuremath{\blacksquare}\doendproof}%
  \begin{proof}%
}{%
  \end{proof}%
}

\usepackage{booktabs}
\usepackage{colortbl}
\usepackage{xcolor}

\newcommand{\SB}{\{\,} \newcommand{\SM}{\;{|}\;} \newcommand{\SE}{\,\}}
\DeclareMathOperator{\tw}{tw}

\colorlet{tableheadcolor}{gray!25}
\colorlet{tablerowcolor}{gray!20} 
\newcommand{\rowcol}{\rowcolor{tablerowcolor}} %

\usepackage{subfig}

\makeatletter
\g@addto@macro\@floatboxreset\centering
\makeatother

\newcount\bsubfloatcount
\newtoks\bsubfloattoks
\newdimen\bsubfloatht

\makeatletter
\newcommand{\bsubfloat}[2][]{%
  \sbox\z@{#2}%
  \ifdim\bsubfloatht<\ht\z@
    \bsubfloatht=\ht\z@
  \fi
  \advance\bsubfloatcount\@ne
  \@namedef{bsubfloat\romannumeral\bsubfloatcount}{%
    \subfloat[#1]{\vbox to\bsubfloatht{\hbox{#2}\vfill}}}%
}
\newcommand{\resetbsubfloat}{\bsubfloatcount\z@\bsubfloatht=\z@}
\makeatother


\begin{document}

\title{Algorithms and hardness results for happy coloring problems}

\author{N. R. Aravind\inst{1} \and Subrahmanyam Kalyanasundaram\inst{1} \and Anjeneya Swami Kare\inst{1}\thanks{
Faculty member of University of Hyderabad. This work is carried out as part of his PhD program at IIT Hyderabad.} \and Juho Lauri\inst{2}\thanks{Work partially supported by the Emil Aaltonen Foundation (J.L.).}}
\institute{Department of Computer Science and Engineering\\
Indian Institute of Technology\\ Hyderabad, India\\ \mailsa \and Tampere University of Technology, Finland\\ \mailsb}

\maketitle
\thispagestyle{plain}
\begin{abstract}
In a vertex-colored graph, an edge is \emph{happy} if its endpoints have the same color. Similarly, a vertex is \emph{happy} if all its incident edges are happy.
Motivated by the computation of homophily in social networks, we consider the algorithmic aspects of the following \probHappyEdges (\probkMHE) problem: given a partially $k$-colored graph $G$, find an extended full $k$-coloring of $G$ maximizing the number of happy edges.
When we want to maximize the number of happy vertices, the problem is known as \probHappyVertices (\probkMHV).
We further study the complexity of the problems and their weighted variants.
For instance, we prove that for every $k \geq 3$, both problems are NP-complete for bipartite graphs, and \probkMHV remains hard for split graphs.
In terms of exact algorithms, we show both problems can be solved in time $O^*(2^n)$, and give an even faster $O^*(1.89^n)$-time algorithm when $k=3$.
From a parameterized perspective, we give a linear vertex kernel for \probkWMHE, where edges are weighted and the goal is to obtain happy edges of at least a specified total weight.
Finally, we prove both problems are solvable in polynomial-time when the graph has bounded treewidth or bounded neighborhood diversity.
\end{abstract}

\section{Introduction}
Analyzing large networks is of fundamental importance for a constantly growing number of applications. In particular, how does one mine e.g., social networks to provide valuable insight? A basic observation concerning the structure of social networks is \emph{homophily}, that is, the principle that we tend to share characteristics with our friends. Intuitively, it seems believable our friends are similar to us in terms of their age, gender, interests, opinions, and so on. In fact, this observation is well-known in sociology (see e.g.,~\cite{homph,McPherson2001,Lazarsfeld1954}). For example, imagine a network of supporters in a country with a two-party system. In order to check whether there is homophily by political stance (i.e., a person tends to befriend a person with similar political beliefs), we could count the number of edges between two people of opposite beliefs. If there were no such edges, we would observe homophily in an extreme sense. It is characteristic of social networks that they evolve over time: links tend to be added between people that share some characteristic. But given a snapshot of the network, how extensively can homophily be present? For instance, how far can an extreme ideology spread among people some of whom are ``politically neutral''?

We abstract these questions regarding the computation of homophily as follows.
Consider a vertex-colored graph $G=(V,E)$.
We say an edge is \emph{happy} if its endpoints have the same color (otherwise, the edge is \emph{unhappy}).
Similarly, a vertex is \emph{happy} if it and all its neighbors have the same color (otherwise, the vertex is \emph{unhappy}).
Equivalently, a vertex is happy when all of its incident edges are happy.
Let $S \subseteq V$, and let $c : S \to [k]$ be a partial vertex-coloring of $G$.
A full coloring $c' : V \to [k]$ is an \emph{extended full coloring} of $c$ if $c(v) = c'(v)$ for all $v \in S$.
In this paper, we consider the following coloring problems.

\begin{framed}
\vspace*{-0.25cm}
\noindent \probWHappyEdges (\probkWMHE) \\
\textbf{Instance:} A graph $G=(V,E)$, a partial vertex-coloring $c : S \subseteq V \to [k]$, a weight function $w : E \to \mathbb{N}$, and an integer $\ell$. \\
\textbf{Question:} Is there an extended full coloring $c'$ of $c$ such that the sum of the weights of the happy edges is at least $\ell$?
\vspace*{-0.25cm}
\end{framed}

\begin{framed}
\vspace*{-0.25cm}
\noindent \probWHappyVertices (\probkWMHV) \\
\textbf{Instance:} A graph $G=(V,E)$, a partial vertex-coloring $c : S \subseteq V \to [k]$, a weight function $w : V \to \mathbb{N}$, and an integer $\ell$. \\
\textbf{Question:} Is there an extended full coloring $c'$ of $c$ such that the sum of the weights of the happy vertices is at least $\ell$?
\vspace*{-0.25cm}
\end{framed}
\noindent We also consider the unweighted versions of the problems obtained by letting the weight of each edge or vertex be one.
We refer to these problems as \probkMHE and \probkMHV, respectively.

\heading{Previous work}
Zhang and Li~\cite{mhve} proved that for every $k \geq 3$, the problems \probkMHE and \probkMHV are NP-complete.
However, when $k=2$, they gave algorithms running in time $O(\min\{n^{2/3}m, m^{3/2} \})$ and $O(mn^7 \log n)$ for 2-MHE and 2-MHV, respectively.
Towards this end, the authors used max-flow algorithms ($2$-MHE) and minimization of submodular functions ($2$-MHV).
Moreover, the authors presented approximation algorithms with approximation ratios 1/2 and $\max \{ 1/k, \Omega(\Delta^{-3}) \}$ for \probkMHE and \probkMHV, respectively, where $\Delta$ is the maximum degree of the graph.
Later on, Zhang, Jiang, and Li~\cite{mhveapr} gave improved algorithms with approximation ratios 0.8535 and $1/(\Delta+1)$ for \probkMHE and \probkMHV, respectively.
In~\cite{kmhvelintree}, a subset of the current authors proved that both problems are solvable in polynomial time for trees.

Perhaps not surprisingly, the happy coloring problems are tightly related to cut problems.
Indeed, the \probkMHE problem is a generalization of the following \probMultiwayUncut problem~\cite{mwuc2006}.
In this problem, we are given a weighted undirected graph $G=(V,E)$ and a terminal set $S = \{s_1, s_2, \ldots, s_k\} \subseteq V$.
The goal is to find a partition of $V$ into classes $V_1,\ldots,V_k$ such that each class contains exactly one terminal and the total weight of the edges not cut by the partition is maximized. We obtain the \probMultiwayUncut problem as a special case of \probkMHE, when each color is used to precolor exactly one vertex.
We also mention that the complement of the \probMultiwayUncut problem is the \probMultiwayCut problem that has been studied before (see e.g.,~\cite{mwc92,mwc91}).
There are known (parameterized) algorithms for the \probMultiwayCut problem with the size of the cut $\ell$ as the parameter.
In this regard, the fastest known algorithm runs in $O^*(1.84^\ell)$ time~\cite{parmwcut}.

\heading{Our results}
We further study the complexity of happy coloring problems including their unweighted variants.
Our paper is organized as follows:
\begin{itemize}
\item In Section~\ref{sec:hardness}, we focus on hardness results of the problems and their unweighted variants for special graph classes. We prove that for every $k \geq 3$, the problem \probkMHV is NP-complete for split graphs and bipartite graphs. This extends the hardness result of Zhang and Li~\cite{mhve} for general graphs. Similarly, we show that \probkMHE remains NP-complete for bipartite graphs, and that its weighted variant is hard for complete graphs.

\item In Section~\ref{sec:exact}, we consider exact exponential-time algorithms for the happy coloring problems. The naive brute force runs in $k^n n^{O(1)}$ time, but we show that for every $k \geq 3$, there is an algorithm running in time $O^*(2^n)$, where $n$ is the number of vertices in the input graph. Moreover, we prove that this is not optimal for every $k$ by giving an even faster $O^*(1.89^n)$-time algorithm for both 3-MHE and 3-MHV.

\item In Section~\ref{sec:kernel}, we show that despite its hardness on complete graphs, the problem \probkWMHE admits a small kernel of size $k+\ell$ on general graphs. The ingredients of the kernel are a polynomial-time algorithm for \probkWMHE when the uncolored vertices induce a forest combined with simple reduction rules.

\item Finally, in Section~\ref{sec:tw-and-nd}, we study the complexity of both problems for sparse graphs (i.e., small treewidth) and the simplest of dense graphs (i.e., small neighborhood diversity). When either parameter is bounded, we show both problems admit polynomial-time algorithms.
\end{itemize}

\section{Preliminaries}
\label{sec:prelims}
We use standard asymptotic notation.
Sometimes, we write $f(n) = O^*(g(n))$ if $f(n) = O(g(n) \text{poly}(n))$, where $\text{poly}(n)$ represents any polynomial in $n$.
For a positive integer $n$, we use $[n]$ to denote the set $\{1,2,\ldots,n\}$.

Unless specified otherwise, all graphs we consider are simple and undirected.
Let $G=(V,E)$ be an undirected graph.
To reduce clutter, we use the shorthand $uv$ to denote an edge $\{u,v\} \in E$.
Two vertices $u,v \in V$ are \emph{adjacent} or \emph{neighbors} if $uv \in E$.
The set of neighbors of $v$ is the \emph{open neighborhood} of $v$, denoted by $N(v)$.
The \emph{closed neighborhood} of $v$, denoted by $N[v]$, is defined as $N[v] = N(v) \cup \{v\}$.
More generally, for $V' \subseteq V$, we write $N(V')$ and $N[V']$ to denote the set of all neighbors in the open neighborhood and closed neighborhood of each vertex in $V'$, respectively.
For graph-theoretic concepts not defined here, we refer the reader to~\cite{Diestel2005}.

A \emph{parameterized problem} is a language $L \subseteq \Sigma^* \times \mathbb{N}$, where $\Sigma$ is a fixed, finite alphabet. For an instance $(x,k) \in \Sigma^* \times \mathbb{N}$, we call $k$ the \emph{parameter}. The parameterized problem $L$ is \emph{fixed-parameter tractable} (FPT) when there is an algorithm $\mathcal{A}$, a computable function $f: \mathbb{N} \to \mathbb{N}$, and a constant $c$ such that, given $(x,k) \in \Sigma^* \times \mathbb{N}$, the algorithm $\mathcal{A}$ correctly decides whether $(x,k) \in L$ in time bounded by $f(k) \cdot |x|^c$. An equivalent way of proving a problem is FPT is by constructing a \emph{kernel} for it. A kernel for a parameterized problem $(x,k)$ is a polynomial-time algorithm $\mathcal{B}$ that returns an equivalent instance $(x',k')$ of $L$ such that $|x'| \leq g(k)$, for some computable function $g : \mathbb{N} \to \mathbb{N}$. Here, we say two instances are \emph{equivalent} if the first is a YES-instance iff the second is a YES-instance. Given a parameterized problem, it is a natural to ask whether it admits a kernel, and moreover whether that kernel is small.\footnote{This is abuse of notation: as is commonly done, we call the output of the kernel also a kernel.} By small, we typically mean a polynomial kernel, or even a linear kernel (i.e., $g(k) = O(k)$).

Kernelization is often discovered through \emph{reduction rules}. A reduction rule is a polynomial-time transformation of an instance $(x,k)$ to another instance of the same problem $(x',k')$ such that $|x'| < |x|$ and $k' \leq k$. A reduction rule is \emph{safe} when the instances are equivalent.
For more on parameterized complexity, we refer the interested reader to~\cite{fpt-book}.

\section{Further hardness results for happy coloring}
\label{sec:hardness}
Zhang and Li~\cite{mhve} showed that both \probkMHE and \probkMHV are NP-complete for every $k \geq 3$.
However, it appears their constructions do not enforce any particular structure on the graphs that are output.
This raises an immediate question: does the presence of some specific yet allowing graph structure enable us to solve the problems in polynomial-time?
We answer this question in the negative even for highly structured yet rich graph classes.
Our NP-completeness results are summarized in the Table~\ref{tbl:hardness-summary}.
All of our proofs are by a polynomial-time reduction from \probkMHE, which is NP-complete for every $k \geq 3$ by the result of~\cite{mhve}.


\begin{table}[t]
\caption{Summary of our hardness results for the unweighted problem variants.}
\label{tbl:hardness-summary}
\centering
\begin{tabular}{lll}
\toprule
Graph class\phantom{x} & \probkMHE\phantom{x} & \probkMHV \\
\midrule
Bipartite	& NPC & NPC \\
\rowcol Complete	& P & P \\
Split		& ? & NPC  \\
\bottomrule
\end{tabular}
\vspace*{-0.5cm}
\end{table}

\subsection{Hardness of \probkMHV for special graph classes}
Whenever possible, it is beneficial to prove hardness results for \probkMHV, as it is a special case of the more general \probkWMHV.
We begin by showing hardness for \emph{split graphs}, that is, for graphs whose vertex set can be partitioned into a clique and an independent set.
Clearly, complete graphs are also split graphs, and before proceeding we make the following easy observation.

\begin{proposition}
\label{prop:kmhv-clique}
Any partial coloring $c$ of the complete graph $K_n$ for any $n \geq 1$ can be extended to a full coloring $c'$ making $n$ vertices happy iff $c$ uses at most one color. Consequently, the problem \probkMHV is solvable in polynomial time for complete graphs for every $k \geq 1$.
\end{proposition}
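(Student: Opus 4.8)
The plan is to first record the structural fact that drives everything: in $K_n$ the open neighborhood of any vertex $v$ is $V \setminus \{v\}$, so under a full coloring $c'$ the vertex $v$ is happy if and only if every vertex of $K_n$ receives the color $c'(v)$. Hence a full coloring of $K_n$ makes \emph{some} vertex happy if and only if it is monochromatic, and in that case it makes \emph{all} $n$ vertices happy simultaneously. Thus the proposition reduces to deciding when a monochromatic extension of $c$ exists, which is a statement purely about the colors used by $c$.

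For the ``if'' direction, suppose $c$ uses at most one color. If $c$ uses exactly one color $a$, extend $c$ by assigning color $a$ to every uncolored vertex; if $c$ uses no color at all (i.e.\ $S = \emptyset$), assign color $1$ to every vertex. Either way the resulting $c'$ is monochromatic, so all $n$ vertices are happy. For the ``only if'' direction, suppose $c$ uses at least two colors, say $c(u) = a \neq b = c(v)$ with $u,v \in S$. Every extension $c'$ still has $c'(u) \neq c'(v)$, and I claim no vertex $w$ is then happy: if $c'(w) = a$ then $w \ne v$ and the edge $wv$ is unhappy, and symmetrically if $c'(w) = b$ the edge $wu$ is unhappy, while if $c'(w) \notin \{a,b\}$ both $wu$ and $wv$ are unhappy. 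So zero vertices are happy, and in particular fewer than $n$, which is the contrapositive of what we want.

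For the algorithmic consequence, the dichotomy just established says that on $K_n$ every extension of $c$ yields either exactly $n$ happy vertices (possible precisely when $c$ uses at most one color) or $0$ happy vertices (the only option once $c$ uses at least two colors). So, given an instance $(K_n, c, \ell)$ of \probkMHV, we count the number of distinct colors appearing in $c$ in polynomial time: if it is at most one we answer \textsc{yes} iff $\ell \le n$, and otherwise we answer \textsc{yes} iff $\ell \le 0$.

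Since the argument is elementary, there is no genuine obstacle; the only points needing a little care are the degenerate cases $n = 1$ and $S = \emptyset$, where ``$c$ uses at most one color'' must be read as allowing $c$ to use no color, and the bookkeeping in the unhappiness argument that $u$ and $v$ themselves are covered (each is adjacent to the other and hence unhappy).
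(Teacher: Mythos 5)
The paper states this proposition without proof, treating it as an easy observation; your argument supplies exactly the intended reasoning (in $K_n$ a vertex is happy iff the whole coloring is monochromatic, yielding the all-or-nothing dichotomy and the trivial algorithm). Your proof is correct, including the careful handling of the degenerate cases $n=1$ and $S=\emptyset$.
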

Let us then proceed with the hardness result for split graphs.
Afterwards, we will modify the construction slightly to obtain a similar result for bipartite graphs.
\begin{theorem}
\label{thm:mhv-hardness-split}
For every $k \geq 3$, the problem \probkMHV is NP-complete for split graphs.
\end{theorem}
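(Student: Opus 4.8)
The plan is to give a polynomial-time reduction from \probkMHE on general graphs, which is NP-complete for every $k \geq 3$ by~\cite{mhve}. Given an instance $(G, c, \ell)$ of \probkMHE with $G = (V,E)$, I would build a split graph $H$ whose clique is (essentially) a copy of $V$ and whose independent set contains one vertex $x_e$ for each edge $e = uv \in E$, with $x_e$ adjacent only to the two clique vertices $u$ and $v$ corresponding to the endpoints of $e$. The intuition is that the edge-vertex $x_e$ in $H$ can be made happy exactly when its two neighbours $u, v$ receive the same colour, i.e.\ exactly when the edge $e$ is happy in the original coloring of $G$. So maximizing happy vertices among the $x_e$'s mirrors maximizing happy edges in $G$.

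The main obstacle is that the clique vertices themselves are almost never happy: by Proposition~\ref{prop:kmhv-clique}, a clique on $|V| \geq 2$ vertices has a happy vertex only if it is monochromatic, which will essentially never happen in a meaningful instance. So I need the reduction to be insensitive to whether clique vertices are happy. The clean fix is to attach, to each clique vertex $v$, a private pendant vertex placed in the independent set — or more robustly, to blow up the clique so that each "real" clique vertex has a twin partner whose colour we do not care about — ensuring that every clique vertex is already unhappy for structural reasons (it has a neighbour it cannot match, regardless of the coloring), and give those auxiliary vertices weight/count that is irrelevant. In the \emph{unweighted} setting I cannot use weights, so instead I would make each clique vertex provably unhappy by, for instance, giving it two pendant independent-set neighbours precolored with two distinct colours (using that $k \geq 3$ gives us enough colours to also precolor things consistently), so no clique vertex can ever be happy; only the edge-gadget vertices $x_e$ and the pendant vertices can be happy, and the pendant vertices' happiness is forced and constant. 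Then the number of happy vertices in $H$ equals (number of happy edges in $G$) plus a fixed constant depending only on $H$.

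Concretely, the steps are: (1) describe the construction of $H$ — clique $K = \{v : v \in V\}$, independent set $I$ containing the edge-vertices $\{x_e : e \in E\}$ plus the pendant ``blocker'' vertices, specify all adjacencies, and carry over the partial coloring $c$ on the clique copies of $S$ while precoloring the blocker vertices; (2) verify $H$ is a split graph (immediate from the partition $(K, I)$); (3) set the target $\ell' = \ell + t$ where $t$ is the constant number of always-happy blocker vertices; (4) prove the forward direction: an extended coloring of $G$ with $\geq \ell$ happy edges yields the "same" coloring on $K$ in $H$, under which every $x_e$ with $e$ happy is happy and every blocker is happy, giving $\geq \ell'$ happy vertices; (5) prove the reverse direction: from an extended coloring of $H$ with $\geq \ell'$ happy vertices, observe no clique vertex is happy (by the blocker construction), so at least $\ell' - t = \ell$ of the happy vertices are among the $x_e$'s, and each such happy $x_e$ certifies that $e$ is happy in the induced coloring of $G$; (6) note the construction is polynomial-size and polynomial-time, and membership in NP is trivial, completing the proof. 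I would also remark that this construction is what gets lightly modified afterwards to handle bipartite graphs (replacing the clique by an independent set and adjusting the gadget so that edge-happiness is still detectable).
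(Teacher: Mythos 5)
Your core construction---a clique of vertex-copies together with an independent set of edge-gadget vertices $x_e$, each adjacent exactly to the copies of the endpoints of $e$---is the same as the paper's, but the blocker gadget you introduce to neutralize the clique vertices has a genuine accounting error. You attach to each clique vertex $v$ two pendant vertices precolored with two distinct colors $a,b$ and claim their happiness is ``forced and constant,'' setting the target to $\ell' = \ell + t$. It is not constant: a pendant precolored $a$ whose only neighbor is $v$ is happy iff $v$ receives color $a$, so each clique vertex contributes one happy pendant if it is colored $a$ or $b$ and zero otherwise. Hence $t$ is ill-defined, the forward direction cannot guarantee $\ell + t$ happy vertices (an optimal coloring of $G$ may assign $v$ a third color, since $k \geq 3$), and the reverse direction breaks more seriously: a coloring of $H$ can harvest up to $|V(G)|$ happy vertices from pendants alone, so from $\ell'$ happy vertices you can only conclude that $\ell' - p$ edge-gadgets are happy, where $p$ may strictly exceed whatever you fixed $t$ to be, and you do not recover $\ell$ happy edges in $G$.

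The gadget is also unnecessary. The paper's proof simply notes that one may assume the \probkMHE instance has at least two vertices precolored with distinct colors (otherwise coloring everything with the single used color makes every edge happy and the instance is trivially YES); then every clique vertex has two neighbors of distinct, fixed colors, so by Proposition~\ref{prop:kmhv-clique} no clique vertex can ever be happy, no auxiliary vertices are needed, and the target stays at $\ell$. If you prefer to keep your blocker idea, the repair is to attach $k$ pendants to each clique vertex, one precolored with each color: then exactly one pendant per clique vertex is happy under every extended full coloring, $t = |V(G)|$ is genuinely constant, and both directions of the equivalence go through.
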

\begin{proof}
Let $I = (G, c, \ell)$ be an instance of \probkMHE, and let us in polynomial time construct an instance $I' = (G', c', \ell)$ of \probkMHV.
We can safely (and crucially) assume at least two vertices of $G$ are precolored (in distinct colors), for otherwise the instance is trivial.
We construct the split graph $G' = (C \cup B, E' \cup E'')$, where
\begin{itemize}
\item $C = \{ v_x \mid x \in V(G) \}$,
\item $B = \{ v_e \mid e \in E(G) \}$,
\item $E' = \{ v_ev_x \mid e \text{ is incident to } x \text { in } G \}$, and
\item $E'' = \{ v_x v_{x'} \mid x,x' \in V(G) \}$.
\end{itemize}
That is, $C$ forms a clique and $B$ an independent set in $G'$, proving $G'$ is split.
In particular, observe that the degree of each vertex $v_e$ is two.
To complete the construction, we retain the precoloring, i.e., set $c'(v_x) = c(x)$ for every $x \in V(G)$.
The construction is illustrated in Figure~\ref{fig:mhv-hardness-split}.

We claim that $I$ is a YES-instance of \probkMHE iff $I'$ is a YES-instance of \probkMHV.
Suppose $\ell$ edges can be made happy in $G$ by an extended full coloring of $c$.
Consider an edge $e \in E(G)$ whose endpoints are colored with color $i$.
To make $\ell$ vertices happy in $G'$, we give $v_e$ and its two neighbors the color $i$.
For the other direction, suppose $\ell$ vertices are happy under an extended full coloring of $c'$.
As at least two vertices in $C$ are colored in distinct colors, it follows by Proposition~\ref{prop:kmhv-clique} that all the happy vertices must be in $B$.
Furthermore, the vertices in $B$ correspond to precisely the edges in $E(G)$, so we are done.
\end{proof}

\begin{figure}[t]
\bsubfloat[]{%
  \includegraphics[scale=1.25]{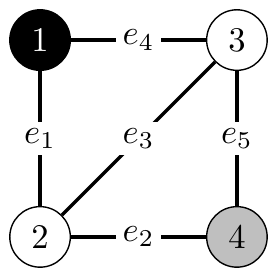}%
}
\bsubfloat[]{%
  \includegraphics[scale=1.25]{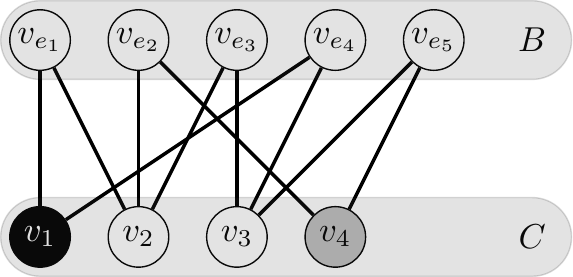}%
}
\bsubfloati\qquad\bsubfloatii
\caption{\textbf{(a)} A graph $G$ of an instance of \probkMHE, where white vertices correspond to uncolored vertices. \textbf{(b)} The graph $G$ transformed into a split graph $G'$ by the construction of Theorem~\ref{thm:mhv-hardness-split}. The edges between the vertices in $C$ are not drawn.}
\vspace{-0.5cm}
\label{fig:mhv-hardness-split}
\end{figure}

\begin{theorem}
\label{thm:mhv-hardness-bipartite}
For every $k \geq 3$, the problem \probkMHV is NP-complete for bipartite graphs.
\end{theorem}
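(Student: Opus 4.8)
The plan is to recycle the reduction behind Theorem~\ref{thm:mhv-hardness-split} almost verbatim, changing only the ingredient that destroys bipartiteness. Observe that in that construction the \emph{sole} obstruction to being bipartite is the clique $E''$ placed on $C = \{v_x \mid x \in V(G)\}$: the remainder of $G'$, namely the two independent sets $C$ and $B = \{v_e \mid e \in E(G)\}$ joined by the incidence edges $E'$, is already a bipartite graph with parts $C$ and $B$. So the task is to drop $E''$ and replace the property it was there to guarantee --- that no vertex of $C$ can ever be happy --- by a gadget that keeps the graph bipartite.

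Concretely, I would again start from an instance $I = (G, c, \ell)$ of \probkMHE in which at least two vertices of $G$ are precolored in distinct colors (otherwise color everything with one color and decide the instance in polynomial time). Build $G'$ on vertex set $C \cup B \cup \{u_1, u_2\}$, where $C$, $B$, and the incidence edges $E'$ are exactly as in Theorem~\ref{thm:mhv-hardness-split}, the sets $C$ and $B$ are independent, and the two new vertices $u_1, u_2$ are each made adjacent to every vertex of $C$ and to nothing else. Retain the precoloring on $C$, i.e.\ $c'(v_x) = c(x)$ for $x \in S$, and additionally set $c'(u_1) = 1$ and $c'(u_2) = 2$. Then $G'$ is bipartite with parts $C$ and $B \cup \{u_1, u_2\}$, it is constructible in polynomial time, and membership of \probkMHV in NP is clear.

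What remains is to show $I$ is a YES-instance of \probkMHE iff $I' = (G', c', \ell)$ is a YES-instance of \probkMHV, and here the argument runs parallel to the split case. The crucial point is that in \emph{every} extended full coloring of $c'$ no vertex outside $B$ is happy: each $v_x \in C$ sees the two distinct colors $1$ and $2$ on its neighbors $u_1$ and $u_2$; and each $u_j$ is adjacent to all of $C$, which contains two precolored vertices of distinct colors, so $u_j$ has a neighbor whose color differs from $j = c'(u_j)$. Hence, exactly as before, every happy vertex lies in $B$, and $v_e$ with $e = xx'$ is happy iff $c'(v_x) = c'(v_{x'})$. Since coloring the uncolored vertices of $C$ is the same as choosing an extended full coloring of $c$ in $G$, the maximum number of happy vertices in $G'$ equals the maximum number of happy edges in $G$, so the two thresholds $\ell$ coincide. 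The one step I would be most careful about --- and the only place anything non-trivial happens --- is verifying that $u_1$ and $u_2$ themselves cannot become happy; this is precisely where the ``two precolored vertices in distinct colors'' assumption is used, and it is why two anchor vertices (rather than one) are needed.
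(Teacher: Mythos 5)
Your proof is correct, and it follows the same overall strategy as the paper: take the incidence-graph construction from Theorem~\ref{thm:mhv-hardness-split}, delete the clique edges $E''$ on $C$, and install a bipartiteness-preserving gadget whose only job is to guarantee that no vertex of $C$ (and no gadget vertex) can ever be happy, so that the happy vertices are exactly the degree-two vertices of $B$ as before. The only difference is the gadget itself: the paper attaches to \emph{each} $v_x \in C$ a private $4$-cycle through three new vertices precolored with three distinct colors (whence the $k \geq 3$ requirement is used directly in the gadget), whereas you add two \emph{global} anchor vertices $u_1, u_2$ precolored $1$ and $2$ and joined to all of $C$. Your version is smaller ($2$ new vertices instead of $3|V(G)|$) and correctly identifies the one delicate point, namely that $u_1, u_2$ themselves stay unhappy only because $C$ carries two precolored vertices of distinct colors --- an assumption the paper's split-graph reduction already makes and justifies, so nothing is lost. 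Both gadgets do the same work; yours leans on the precoloring of the original instance where the paper's is self-contained at the cost of more vertices.
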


\begin{proof}
We start with the construction of Theorem~\ref{thm:mhv-hardness-split}.
Modify the split graph $G'$ by deleting the edges between the vertices in $C$, i.e., let $G' = (C \cup B, E')$.
For each $v_x \in C$, add a path $S_{v_x} = \{ v_x^1, v_x^2, v_x^3 \}$ along with the edges $v_xv_x^1$ and $v_x^3v_x$.
In other words, each $v_x$ forms a 4-cycle with the vertices in $S_{v_x}$.
Clearly, we have that $G'$ is bipartite as it contains no odd cycles.
Arbitrarily choose three distinct colors from $[k]$, and map them bijectively to $S_{v_x}$.
Observe that by construction, none of the vertices in $S_{v_x}$ can be happy under any $c'$ extending $c$.
This completes the construction.
Correctness follows by the same argument as in Theorem~\ref{thm:mhv-hardness-split}.
\end{proof}

\subsection{Hardness of \probkMHE for special graph classes}
In the spirit of the previous subsection, let us begin with an observation regarding the polynomial-time solvability of the problem for complete graphs.

\begin{proposition}
\label{prop:kmhe-clique}
The problem \probkMHE is solvable in polynomial time for complete graphs for every $k \geq 1$.
\end{proposition}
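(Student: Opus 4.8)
The plan is to exploit the fact that in a complete graph whether an edge is happy depends only on the \emph{sizes} of the color classes, not on which particular vertices receive which colors. Concretely, if an extended full coloring of $K_n$ assigns $n_i$ vertices to color $i$, then since every pair of vertices is joined by an edge, exactly $\sum_{i=1}^{k}\binom{n_i}{2}$ edges are happy. Writing $p_i$ for the number of precolored vertices already in color $i$ and $u$ for the number of uncolored vertices, the problem becomes: choose nonnegative integers $x_1,\dots,x_k$ with $\sum_i x_i = u$ so as to maximize $f(x_1,\dots,x_k) = \sum_{i=1}^{k}\binom{p_i+x_i}{2}$.

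The key observation is that $f$ is convex, since each summand $\binom{p_i+x_i}{2}$ is a convex function of $x_i$; hence its maximum over the simplex $\{\, x\in\mathbb{R}_{\geq 0}^k : \sum_i x_i = u \,\}$ is attained at a vertex of that simplex, that is, at a point of the form $x = u\cdot e_j$. Equivalently: in an optimal assignment no two distinct color classes both receive uncolored vertices. One can also see this by a direct exchange argument: if classes $i$ and $j$ both receive uncolored vertices and, after the coloring, $n_i \geq n_j$, then moving one uncolored vertex from class $j$ to class $i$ changes the happy-edge count by $n_i - n_j + 1 \geq 1 > 0$, contradicting optimality. Consequently an optimal solution is obtained by trying, for each color $j\in[k]$, the assignment that places all $u$ uncolored vertices in color $j$; this yields $\binom{p_j+u}{2} + \sum_{i\neq j}\binom{p_i}{2}$ happy edges, and we output the best of these $k$ candidates. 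All of this runs in time polynomial (in fact $O(kn)$) in the size of the input, which proves the proposition.

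I do not expect any genuine obstacle here; the only point that deserves care is to note that the convexity/exchange argument fails in the weighted setting, where it is the actual \emph{set} of happy edges, and not merely the vector of class sizes, that determines the objective. This is precisely why \probkWMHE can be, and as shown later is, NP-hard on complete graphs, so the restriction to the unweighted variant in the statement is essential.
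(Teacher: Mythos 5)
Your proof is correct, and it arrives at the same algorithm as the paper (try each color $j$, give all uncolored vertices color $j$, keep the best), but the justification is organized differently. The paper argues via a direct upper bound: after discarding the edges inside the precolored set $S$ (whose happiness is already fixed), each uncolored vertex can contribute at most $p$ happy edges towards $S$, where $p$ is the largest multiplicity of any color in the precoloring, and all $\binom{|C|}{2}$ edges inside the uncolored clique $C$ can be made happy simultaneously; the bound $p\cdot|C|+|E(C)|$ is then matched exactly by coloring all of $C$ with the most frequent precolor. You instead observe that on $K_n$ the objective depends only on the class-size vector, reduce to maximizing the separable convex function $\sum_i \binom{p_i+x_i}{2}$ over the simplex $\sum_i x_i = u$, and conclude by convexity (or the exchange step $n_i - n_j + 1 \geq 1$) that all uncolored vertices go to one class. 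Both arguments are elementary and correct; the paper's upper-bound argument additionally identifies the optimal color directly rather than by trying all $k$, while your formulation makes transparent \emph{why} the argument is confined to the unweighted case --- your closing remark that the class-size abstraction breaks down under edge weights is exactly the right intuition for why the paper's Theorem~\ref{thm:wmhe-hardness-complete} is not in tension with this proposition.
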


\begin{proof}
Let $S$ denote the set of precolored vertices for the $K_n$ for any $n \geq 1$.
Delete edges whose both endpoints are in $S$, since their happiness is already determined by the precoloring.
Observe that $S$ is now an independent set and $C = V \setminus S$ induces a clique.
Moreover, every vertex in $S$ is adjacent to every vertex in $C$.

Denote by $p$ the most frequent occurrence of any color among the precolored vertices.
For any vertex $v \in C$, regardless of the color we give to $v$, we can make at most $p$ edges happy among the edges from the vertices in $S$ to $v$.
Thus, the number of happy edges is at most $p \cdot |C| + |E(C)|$.
In fact, we can achieve exactly $p \cdot |C| + |E(C)|$ happy edges by giving a single color to all the vertices in $C$.
More precisely, we color all the uncolored vertices with the color that is used $p$ times, completing the proof.
\end{proof}

We remark that for the above proof to hold, we do not need the graph to be complete.
Indeed, the procedure described in the proof can be applied as long as every precolored vertex is adjacent to every uncolored vertex.

We now turn to hardness results, and show that for every $k \geq 3$, the problem \probkMHE is NP-complete for bipartite graphs as well.
\begin{theorem}
\label{thm:mhe-hardness-bipartite}
For every $k \geq 3$, the problem \probkMHE is NP-complete for bipartite graphs.
\end{theorem}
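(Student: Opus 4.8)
The plan is a reduction from \probkMHE on general graphs, which is NP-complete for every $k \geq 3$ by~\cite{mhve}. Membership in NP is immediate (guess the extended full coloring, count the happy edges), so only hardness is at issue. The one obstruction in an instance $(G, c, \ell)$ is that $G$ may contain odd cycles, and the natural fix is to \emph{subdivide every edge exactly once}. Concretely, build $G' = (V(G) \cup W, E')$ with $W = \{ w_e \mid e \in E(G) \}$ and $E' = \{ xw_e, yw_e \mid e = xy \in E(G) \}$; then every edge of $G'$ joins a vertex of $V(G)$ to a vertex of $W$, so $(V(G), W)$ is a bipartition and $G'$ is bipartite. Retain the precoloring on the original vertices, i.e.\ set $c'(x) = c(x)$ for $x \in V(G)$ and leave each $w_e$ uncolored, and put $\ell' = |E(G)| + \ell$. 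This is polynomial-time computable, so it remains to prove $(G, c, \ell)$ is a YES-instance of \probkMHE iff $(G', c', \ell')$ is.

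For the forward direction, take an extended full coloring of $c$ making $h \geq \ell$ edges of $G$ happy. Extend it to $G'$ by coloring each $w_e$, for $e = xy$, with the color of $x$: if $e$ was happy then both new edges $xw_e$ and $yw_e$ become happy, and if $e$ was unhappy then exactly one of them becomes happy. Hence $G'$ obtains $2h + (|E(G)| - h) = |E(G)| + h \geq \ell'$ happy edges.

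For the backward direction — where the (mild) content lies — start from any extended full coloring of $c'$, restrict it to $V(G)$, and let $h$ be the number of happy edges this restriction produces in $G$. The key observation is that the path $x - w_e - y$ that replaced an edge $e = xy$ contributes at most two happy edges to $G'$, and it contributes two only when $x$, $w_e$, $y$ all share a color, which forces $c'(x) = c'(y)$, i.e.\ $e$ is happy in $G$; when $e$ is unhappy in $G$ the path contributes at most one happy edge (both of its edges being happy would force the endpoints to agree). Summing over $E(G)$, the number of happy edges in $G'$ is at most $2h + (|E(G)| - h) = |E(G)| + h$. So if $G'$ has at least $\ell' = |E(G)| + \ell$ happy edges, then $h \geq \ell$, and we are done.

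There is no genuinely hard step: bipartiteness of $G'$ is immediate, and each direction just tracks how many of the two subdivision edges of an original edge can simultaneously be happy. The only place to be careful is the inequality chain $|E(G)| + \ell \leq \mathrm{happy}(G') \leq |E(G)| + h$ in the backward direction — specifically, that a subdivided \emph{unhappy} edge cannot donate two happy edges. No case analysis on $k$ is needed: the same construction works verbatim for every $k \geq 3$, using only that \probkMHE is already hard in that range.
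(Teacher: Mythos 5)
Your proposal is correct and is essentially the same as the paper's proof: both subdivide every edge of $G$, keep the precoloring on the original vertices, set the new threshold to $|E(G)| + \ell$, and count that a subdivided edge contributes two happy edges exactly when its original endpoints agree and at most one otherwise. The only cosmetic difference is that the paper phrases the backward direction by counting subdivision vertices with both incident edges happy, while you bound the total per subdivided path; these are the same argument.
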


\begin{proof}
Let $I = (G, c, \ell)$ be an instance of \probkMHE, and let us in polynomial time construct an instance $I' = (G', c, m+\ell)$ of \probkMHE, where $G'$ is bipartite.
We obtain $G'$ by subdividing every edge of $G$.
Observe that if $G$ has $n$ vertices and $m$ edges, then $G'$ has $n+m$ vertices and $2m$ edges.
Clearly, $G'$ is bipartite.

We will now show that $G$ has an extended full coloring making at least $\ell$ edges happy iff $G'$ has an extended full coloring making at least $m + \ell$ edges happy.
Let $c'$ be an extended full coloring of the precoloring $c$ given to $G$.
We give $G'$ the same extended full coloring, and give each vertex in $v \in V(G') \setminus V(G)$ an arbitrary color that appears on a vertex adjacent to $v$.
Thus, for each edge in $G$, we have one extra happy edge in $G'$, giving us a total of at least $m + \ell$ happy edges.
For the other direction, let $c'$ be an extended full coloring of $c$ that makes at least $m + \ell$ edges happy in $G'$.
Now, there are at least $\ell$ vertices in $V(G') \setminus V(G)$ with both of its incident edges happy.
These $2 \ell$ happy edges correspond to the $\ell$ happy edges in $G$.
This concludes the proof.
\end{proof}

We are unable to prove that \probkMHE remains NP-complete for split graphs.
However, when weights are introduced, it is easy to see that the problem remains NP-complete even for complete graphs for every $k \geq 3$.
\begin{theorem}
\label{thm:wmhe-hardness-complete}
The problem \probkWMHE is NP-complete for complete graphs for every $k \geq 3$.
\end{theorem}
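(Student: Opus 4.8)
The plan is to reduce from \probkMHE on general graphs, which is NP-complete for every $k \geq 3$ by~\cite{mhve}. Given an instance $I = (G, c, \ell)$ of \probkMHE, I would construct a complete graph $G'$ on the same vertex set $V(G)$, retain the precoloring $c$ unchanged, and define edge weights so that the happy edges of $G'$ ``see'' exactly the structure of $G$. Concretely, I would set $w(e) = 1$ for every edge $e \in E(G)$, and $w(e) = 0$ (or, if the problem insists on positive weights in $\mathbb{N}$, a tiny uniform weight that is dominated — but zero weights are cleanest) for every non-edge of $G$ that becomes an edge in the completion. The target would be $\ell' = \ell$. Since the zero-weight edges contribute nothing regardless of how the coloring treats them, the total weight of happy edges in $G'$ under any extended full coloring $c'$ equals the number of happy edges of $G$ under $c'$. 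Thus $I$ is a YES-instance of \probkMHE iff $I'$ is a YES-instance of \probkWMHE, and the reduction is clearly polynomial-time. Membership in NP is immediate, since a full coloring is a polynomial-size certificate whose happy-edge weight is checkable in polynomial time.

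The one subtlety — and the only real obstacle — is the requirement $w : E \to \mathbb{N}$; depending on the convention, $\mathbb{N}$ may exclude $0$. If zero weights are disallowed, I would instead scale: give every original edge of $G$ weight $M := m^2 + 1$ (where $m = |E(G)|$) and every new edge weight $1$, and set the target to $\ell' = M\ell$. The number of new edges is at most $\binom{n}{2} \le n^2$, but I should bound their contribution independently of the coloring — and here is the care needed: a single extended full coloring could make many new edges happy. To handle this, observe that the count of happy \emph{original} edges is an integer in $\{0, 1, \ldots, m\}$, so the original-edge contribution is a multiple of $M$ lying in $\{0, M, \ldots, mM\}$, while the new-edge contribution is a nonnegative integer at most (number of new edges). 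As long as (number of new edges) $< M$, no amount of new happy edges can push the total from ``$< M h$'' up to ``$\ge M h$'' for the integer $h$ equal to the true happy-edge count of $G$; hence asking for total weight $\ge M\ell$ is equivalent to asking for at least $\ell$ happy original edges. Choosing $M = \binom{n}{2} + 1$ suffices and keeps all weights polynomially bounded.

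I would then write the proof in the clean version (zero weights) if the paper's convention permits, noting parenthetically the scaling trick otherwise; given that the paper writes $w : E \to \mathbb{N}$ and elsewhere treats weights loosely, I expect a one-line remark will do. The verification of correctness is then just the two-way implication described above, essentially a triviality once the weighting is fixed, so the write-up should be short.
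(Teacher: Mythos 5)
Your proposal is correct and matches the paper's proof: the paper uses exactly your scaling trick, assigning weight $\alpha = \binom{n}{2} - m + 1$ to the original edges and weight $1$ to the filler edges with target $\alpha\ell$, so that the total weight of happy filler edges (at most $\binom{n}{2}-m < \alpha$) can never compensate for a missing original happy edge. Your threshold analysis is the same argument, spelled out in more detail than the paper's ``it is easy to see.''
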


\begin{proof}
Let $I = (G, c, \ell)$ be an instance of \probkMHE, and let us in polynomial time construct an instance $I' = (G', c', w, \alpha \cdot \ell)$ of \probkWMHE, where $G'$ is a complete graph.
Here $\alpha = {n \choose 2} - m + 1$, where $n = |V(G)|$ and $m = |E(G)|$.
To construct $G'$, first let $V(G') = V(G)$.
For each $uv \in E(G)$, insert the edge $uv$ to $G'$ with weight $\alpha$.
For each $uv \notin E(G)$, insert the edge $uv$ to $G'$ with weight $1$.
Clearly, $G'$ is a complete graph.
It is easy to see $G$ has at least $\ell$ happy edges iff $G'$ has happy edges of total weight at least $\alpha \cdot \ell$.
\end{proof}

\section{Exact exponential-time algorithms for happy coloring}
\label{sec:exact}
In this section, we consider the happy coloring problems from the viewpoint of exact exponential-time algorithms.
Every problem in NP can be solved in time exponential in the input size by a brute-force algorithm.
For \probkMHE (\probkMHV), such an algorithm goes through each of the at most $k^n$ colorings, and outputs the one maximizing the number of happy edges (vertices).
It is natural to ask whether there is an algorithm that is considerably faster than the $k^n n^{O(1)}$-time brute force approach.
In what follows, we show that brute-force can be beaten.

Let us introduce the following more general problem.
\begin{framed}
\vspace*{-0.25cm}
\noindent \probMaxWeightedPartition \\
\textbf{Instance:} An $n$-element set $N$, integer $d$, and functions $f_1,f_2,\ldots,f_d : 2^N \to [-M,M]$ for some integer $M$. \\
\textbf{Question:} A $d$-partition $(S_1,S_2,\ldots,S_d)$ of $N$ that maximizes $f_1(S_1) + f_2(S_2) + \cdots + f_d(S_d)$.
\vspace*{-0.25cm}
\end{framed}
\noindent Using an algebraic approach, the following has been shown regarding the complexity of the problem.
\begin{theorem}[Bj\"{o}rklund, Husfeldt, Koivisto~\cite{Koivisto2009}]
\label{thm:mwp}
The \probMaxWeightedPartition problem can be solved in
\begin{itemize}
\item $2^n d^2 M \cdot n^{O(1)}$ time and exponential space, and
\item $3^n d^2 M \cdot n^{O(1)}$ time and polynomial space.
\end{itemize}
\end{theorem}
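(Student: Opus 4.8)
The plan is to reduce \probMaxWeightedPartition to a short sequence of subset-convolution-style computations, following the inclusion--exclusion (zeta/M\"obius transform) machinery of~\cite{Koivisto2009}. First I would fold both the weight and the size of each part into a generating function: for $i \in [d]$ and $Y \subseteq N$, set $g_i(Y) = x^{\,f_i(Y)+M}\, y^{\,|Y|}$, where the shift by $M$ keeps the $x$-exponent in $\{0,1,\dots,2M\}$, and consider the \emph{covering product}
\[
P(N) \;=\; \sum_{Y_1 \cup \cdots \cup Y_d = N} \; \prod_{i=1}^{d} g_i(Y_i),
\]
the sum ranging over all ordered $d$-tuples $(Y_1,\dots,Y_d)$ of subsets of $N$ whose union is $N$. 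In any monomial $x^{a} y^{b}$ of $P(N)$ we have $b = \sum_i |Y_i| \ge |N|$, with equality exactly when the $Y_i$ are pairwise disjoint; hence the coefficient of $x^{\,v+dM} y^{\,|N|}$ in $P(N)$ equals the number of ordered partitions $(S_1,\dots,S_d)$ of $N$ with $\sum_i f_i(S_i) = v$, and the largest $v$ with nonzero coefficient is the optimum we seek.

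The second step is to compute $P(N)$ efficiently. By inclusion--exclusion, $P(N) = \sum_{X \subseteq N} (-1)^{|N|-|X|} \prod_{i=1}^d (\zeta g_i)(X)$, where $(\zeta g_i)(X) = \sum_{Y \subseteq X} g_i(Y)$ is the zeta transform; computing all the $\zeta g_i$ costs $2^n n$ ring operations and the alternating sum costs $2^n d$ more. To keep a single ring operation cheap --- threading $x$ through symbolically would cost a $\mathrm{poly}(dM)$ factor --- I would instead evaluate $x$ at each of $2dM+1$ distinct points of $\mathbb{Z}_p$, for a prime $p$ exceeding both $2dM$ and $d^n$; the bound $p > d^n$ ensures a nonzero integer count remains nonzero modulo $p$. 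For each fixed value of $x$ the working ring is $\mathbb{Z}_p[y]/(y^{\,n+1})$, in which one operation costs $n^{O(1)}$ bit operations, so each evaluation costs $2^n(n+d)\,n^{O(1)}$; a Vandermonde interpolation over the $2dM+1$ values of $x$ then recovers every coefficient of $x^{\,v+dM} y^{\,|N|}$ modulo $p$. This gives total time $2^n d^2 M \cdot n^{O(1)}$ and uses $2^n \cdot n^{O(1)}$ space for the stored zeta transforms.

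For the polynomial-space variant I would avoid precomputing the zeta transforms altogether: while forming the alternating sum over $X$, recompute each $(\zeta g_i)(X) = \sum_{Y \subseteq X} g_i(Y)$ on the fly, which replaces the $2^n$ factor by $\sum_{X \subseteq N} 2^{|X|} = 3^n$ but needs only polynomial working space, giving $3^n d^2 M \cdot n^{O(1)}$ time. I expect the main obstacle to be getting the running time \emph{linear} in $M$ rather than merely polynomial in it: the naive bookkeeping incurs polynomial-in-$dM$ arithmetic, so the evaluate-at-points/interpolate step --- together with choosing the modulus large enough to detect nonzero-ness of counts as large as $d^n$ --- is the real crux. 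A secondary subtlety is that the zeta-transform identity yields the covering product, which over-counts non-disjoint tuples; recording the total cardinality in the auxiliary variable $y$ (which has only $n+1$ relevant degrees) is what trims this down to genuine partitions without an extra $n^{O(d)}$ blow-up from enumerating part-size vectors.
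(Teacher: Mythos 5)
The paper does not prove this statement; it is quoted verbatim from Bj\"{o}rklund, Husfeldt, and Koivisto~\cite{Koivisto2009}, so there is no internal proof to compare against. Your reconstruction follows the same route as that source --- counting ordered covers via the zeta-transform/inclusion--exclusion identity, using the auxiliary cardinality variable to isolate genuine partitions, and evaluating/interpolating over $O(dM)$ points modulo a large prime to keep the dependence on $M$ linear --- and the time and space accounting for both variants checks out. The only omission is that the problem asks for the maximizing partition itself, not just the optimal value, which requires the standard polynomial-overhead self-reduction once the optimum is known.
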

In the following, we observe that the weighted variants of both problems can be reduced to \probMaxWeightedPartition. This results in an algorithm that is considerably faster than one running in time $k^n n^{O(1)}$.

\begin{lemma}
\label{lem:reduction-mwp}
For every $k \geq 1$, the problems \probkWMHE and \probkWMHV reduce in polynomial time to \probMaxWeightedPartition.
\end{lemma}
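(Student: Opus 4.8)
The goal is to encode an instance of \probkWMHE or \probkWMHV as an instance of \probMaxWeightedPartition where the ground set is the vertex set $V$, the number of parts is $d = k$, and each part $S_i$ is the set of vertices receiving color $i$. First I would set $N = V(G)$ and $d = k$, so that a $d$-partition $(S_1,\ldots,S_k)$ of $N$ corresponds exactly to a full coloring $c'$ of $G$ via $c'(v) = i \iff v \in S_i$. To respect the precoloring $c : S \subseteq V \to [k]$, I would build the functions $f_i$ so that any partition placing a precolored vertex in the wrong part is penalized into irrelevance; concretely, for a vertex $v$ with $c(v) = j$, I would subtract a huge penalty $P$ from $f_i(S_i)$ whenever $v \in S_i$ and $i \neq j$, choosing $P$ large enough (larger than the total possible weight) that no optimal partition ever incurs it. Alternatively, and perhaps cleaner, I would restrict attention to colorings that extend $c$ by baking the constraint directly into the definition of $f_i$ on subsets that violate it.

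The substantive part is defining the $f_i$ to count (weighted) happy edges or vertices. For \probkWMHE: an edge $uv$ is happy under $c'$ exactly when $u$ and $v$ lie in the same part, so I would set $f_i(S_i) = \sum_{uv \in E,\ u,v \in S_i} w(uv)$, i.e., the total weight of edges with both endpoints inside $S_i$. Then $\sum_{i=1}^{k} f_i(S_i)$ is precisely the total weight of happy edges under the corresponding coloring, so maximizing the partition objective maximizes happy edge weight. For \probkWMHV: a vertex $v$ is happy under $c'$ when $v$ and all of $N(v)$ share a color, equivalently when $N[v] \subseteq S_i$ for the part $i$ containing $v$; so I would set $f_i(S_i) = \sum_{v :\ N[v] \subseteq S_i} w(v)$, and again $\sum_i f_i(S_i)$ equals the total weight of happy vertices. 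In both cases each $f_i$ is identical as a function (it does not actually depend on $i$ beyond the precoloring penalties), which is fine. I would then fold the precoloring penalties into these $f_i$ and note the bound $M$ on the range is polynomially bounded in the input (the sum of all edge or vertex weights plus the penalty $P$), so Theorem~\ref{thm:mwp} applies and the reduction is polynomial-time, with $M = n^{O(1)}$ under the standard assumption that weights are given in unary or are polynomially bounded — or more carefully, $M$ is bounded by the total weight, which is what the running time in Theorem~\ref{thm:mwp} depends on linearly.

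The main thing to get right is the handling of the partial precoloring, since \probMaxWeightedPartition optimizes over all $d$-partitions with no external constraints: I must ensure that exactly the partitions corresponding to extended full colorings of $c$ are the ones that can be optimal. The penalty trick handles this, but I should argue that (i) some extension of $c$ exists at all (it does — color uncolored vertices arbitrarily), so the optimum is genuinely at least the optimum happy weight over valid colorings, and (ii) any partition that misplaces a precolored vertex has objective value below that of any valid partition, hence is never returned. A minor subtlety worth stating is that \probMaxWeightedPartition as defined is an optimization (function) problem while \probkWMHE/\probkWMHV are phrased as decision problems with threshold $\ell$; the reduction produces the optimum happy weight, which we compare against $\ell$. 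None of this is hard, but the bookkeeping with the penalty $P$ and the resulting bound on $M$ is the step I would be most careful about.
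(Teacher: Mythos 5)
Your reduction is correct, and the core idea (a $k$-partition of the ground set encodes a coloring, with $f_i$ summing the weight of edges internal to class $i$, resp.\ of vertices whose closed neighborhood lies in class $i$) is the same as the paper's. The genuine difference is in how the precoloring is handled. You take $N = V(G)$ and enforce $c$ with a large penalty $P$ on misplaced precolored vertices; the paper instead takes $N = V(G) \setminus S$ (only the uncolored vertices) and folds the precolored vertices directly into the objective by defining $f_i$ over $S_i \cup c^{-1}(i)$, e.g.\ $f_i(S_i) = \sum_{uv \in E(G[S_i \cup c^{-1}(i)])} w(uv)$ for \probkWMHE. Both yield a valid polynomial-time reduction, and your penalty bookkeeping (choosing $P$ above the total weight, noting that a valid extension always exists, and that $M$ grows only by a polynomial factor, which Theorem~\ref{thm:mwp} absorbs since its running time is linear in $M$) is sound. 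What the paper's choice buys is that the exponential factor in Theorem~\ref{thm:mwp} becomes $2^{|N|} = 2^{n'}$ with $n'$ the number of \emph{uncolored} vertices; this refinement is exactly what Theorem~\ref{thm:exact-algs} states and what later combines with the $k+\ell$ kernel to give the $O^*(2^{\ell})$ corollary. With $N = V(G)$ you would only get $O^*(2^{n})$, so while your argument proves the lemma as stated, you should shrink the ground set to the uncolored vertices (which also makes the penalty machinery unnecessary) to support the downstream results. Your observation about the decision-versus-optimization mismatch is a fair point that the paper glosses over.
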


\begin{proof}
Consider the claim for an instance $I = (G,c,w,\ell)$ of \probkWMHE.
To construct an instance of \probMaxWeightedPartition, let $N = V(G) \setminus S$, where $S$ is the set of precolored vertices, let $d = k$, and let $M = \sum_{uv \in E(G)} w(uv)$.
Define $f_i = \sum_{uv \in E(G[ S_i \cup c^{-1}(i) ])} w(uv)$, i.e., $f_i$ sums the weights of the edges $uv$ that range over the edge set of the subgraph induced by the union of $S_i$ and $c^{-1}(i)$, the vertices precolored with color $i$. Thus, a partition $(S_1,\ldots,S_k)$ maximizing $f_1(S_1) + \cdots + f_k(S_k)$ maximizes the weight of happy edges.

Finally, consider the claim for an instance $I = (G,c,w,\ell)$ of \probkWMHV.
Now, we define $f_i = \sum_{v \in S_i : \forall y \in N(v) : y \in ( S_i \cup c^{-1}(i) ) } w(v)$, i.e., $f_i$ sums the weights of the vertices $v$ for which it holds that $v$ and each neighbor $y$ of $v$ are all colored with color $i$. Also, we let $M = \sum_{v \in V(G)} w(v)$, but otherwise the argument is the same as above.
\end{proof}

For some NP-complete problems, the fastest known algorithms run in $O^*(2^n)$ time, but we do not necessarily know whether (under reasonable complexity-theoretic assumptions) they are optimal.
Indeed, could one have an algorithm that runs in $O^*((2-\eps)^n)$ time, for any $\eps > 0$, for either \probkMHE or \probkMHV?
We prove that at least for some values of $k$ this bound can be achieved.
For this, we recall the following result.
\begin{theorem}[Zhang and Li~\cite{mhve}]
\label{thm:2mhe}
For $k=2$, the problems \probkMHE and \probkMHV can be solved in time $O(\min\{n^{2/3}m, m^{3/2} \})$ and $O(mn^7 \log n)$, respectively.
\end{theorem}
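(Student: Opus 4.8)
The case $k=2$ is special because a full $2$-coloring is nothing but a bipartition $(V_1,V_2)$ of $V$, and this lets us recast both problems as well-understood optimization tasks. For \probkMHE the plan is a reduction to a minimum cut: an edge is unhappy exactly when the bipartition separates its endpoints, so making at least $\ell$ edges happy is the same as cutting at most $m-\ell$ edges. Build a network by contracting all vertices precolored with $1$ into a source $s$ and all vertices precolored with $2$ into a sink $t$ (if at most one color occurs among the precolored vertices the instance is trivial---color everything with it), discarding the self-loops this creates and giving every remaining edge unit capacity; the resulting unit-capacity network has at most $n$ vertices and at most $m$ edges. A minimum $s$--$t$ cut here has value equal to the minimum number of unhappy edges, and can be computed in $O(\min\{n^{2/3}m,\, m^{3/2}\})$ time by a maximum-flow algorithm tuned for unit-capacity networks (e.g., that of Even and Tarjan). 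This part is routine.

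The substance is \probkMHV. Write $A$ for the set of vertices receiving color $1$, so $V\setminus A$ receives color $2$; the precoloring confines $A$ to the lattice interval $\mathcal{L}=\{A : L\subseteq A\subseteq V\setminus R\}$, where $L$ and $R$ are the sets of vertices precolored $1$ and $2$. A vertex $v$ is happy iff $N[v]\subseteq A$ or $N[v]\subseteq V\setminus A$, so the number of happy vertices is $g(A):=f(A)+f(V\setminus A)$ with $f(A)=|\{v : N[v]\subseteq A\}|$. The first step is to show $f$ is supermodular: writing $f=\sum_v g_v$ with $g_v(A)=1$ iff $N[v]\subseteq A$, one checks that each $g_v$ has nondecreasing marginals, since for $x\in N[v]\setminus A$ the marginal $g_v(A\cup\{x\})-g_v(A)$ equals $1$ iff $N[v]\setminus\{x\}\subseteq A$, a monotone condition, while the marginal is $0$ for $x\notin N[v]$. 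Because complementation is a lattice anti-automorphism of $2^V$ it preserves supermodularity, so $A\mapsto f(V\setminus A)$ is supermodular as well, and hence $g$ is supermodular on $2^V$ and in particular on $\mathcal{L}$. Maximizing a supermodular function over a distributive lattice is exactly minimizing the submodular function $-g$ over it, which is doable in strongly polynomial time; each evaluation of $g$ costs $O(n+m)$, so feeding this oracle into a strongly polynomial submodular-minimization routine yields a bound of the stated form, $O(m n^7 \log n)$, the remaining factor accounting for the number of oracle calls and the bookkeeping between them.

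The main obstacle is getting the \probkMHV objective into a form to which submodular minimization applies: the happiness condition is a disjunction of two closed-neighborhood-containment events, and it is not obvious a priori that this disjunction does not spoil supermodularity---the marginal computation above, together with the self-duality of the Boolean lattice under complementation, is what makes it work, and this is the crux of the argument. Once that is in place, encoding the precoloring as the restriction to $\mathcal{L}$ and reading off the precise running time from a concrete submodular-function-minimization algorithm (counting its $O(n+m)$-time oracle calls) is bookkeeping, and the \probkMHE half is a standard min-cut reduction.
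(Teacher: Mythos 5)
Your proposal is correct and follows essentially the same route as the cited original: the paper itself imports this theorem from Zhang and Li without proof, but its introduction records that their argument uses a max-flow/min-cut computation for 2-MHE and submodular function minimization for 2-MHV, which is precisely your source--sink contraction reduction and your supermodularity argument for $A \mapsto f(A)+f(V\setminus A)$ restricted to the interval determined by the precoloring. The supermodularity verification via per-vertex marginals and complementation is sound, and the stated running times are recovered exactly as you describe.
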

We are ready to proceed with the following.

\begin{lemma}
\label{lem:3mhe-exact}
For $k = 3$, the problems \probkMHE and \probkMHV can be solved in time $O^*(1.89^{n'})$, where $n'$ is the number of uncolored vertices in the input graph.
\end{lemma}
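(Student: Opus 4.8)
The plan is to beat the trivial $O^*(2^{n'})$ bound (which follows already from Lemma~\ref{lem:reduction-mwp} and Theorem~\ref{thm:mwp}) by \emph{guessing the smallest colour class}. In any extended full $3$-colouring $c'$ of $c$, the three sets $(c')^{-1}(1)\cap(V\setminus S)$, $(c')^{-1}(2)\cap(V\setminus S)$, $(c')^{-1}(3)\cap(V\setminus S)$ partition the $n'$ uncoloured vertices, so at least one of them has size at most $n'/3$. I would therefore branch over all pairs $(i,T)$ with $i\in[3]$ and $T\subseteq V(G)\setminus S$, $|T|\le n'/3$, interpreting the branch as the commitment that the colour-$i$ class is exactly $A_i:=T\cup c^{-1}(i)$. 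The number of branches is $3\sum_{j=0}^{\lfloor n'/3\rfloor}\binom{n'}{j}$; since the number of subsets of an $n'$-set of size at most $n'/3$ is $2^{(H(1/3)+o(1))n'}$ for the binary entropy function $H$, and $2^{H(1/3)}=2^{\log_2 3-2/3}\approx 1.8899<1.89$, this is $O^*(1.89^{n'})$.

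In each branch the colour-$i$ class is fully determined, so the uncoloured vertices outside $T$ must be coloured with the two remaining colours, and I would solve this residual two-colour problem in polynomial time via Theorem~\ref{thm:2mhe}. For \probkMHE: every edge of $G$ with an endpoint in $A_i$ is either already happy (both endpoints in $A_i$) or permanently unhappy (exactly one endpoint in $A_i$), so the residual instance is exactly an unweighted $2$-MHE instance on $G-A_i$ with the precolouring $c$ restricted to the two remaining colours; to the optimum of that instance we add the constant $|E(G[A_i])|$. For \probkMHV one extra observation is needed: a vertex of $V\setminus A_i$ with a neighbour in $A_i$ can never be happy, so I would give it weight $0$ and solve the resulting weighted $2$-MHV instance on $G-A_i$ (weighted $2$-MHV is polynomial-time solvable by the submodular-minimisation approach underlying Theorem~\ref{thm:2mhe}; alternatively, one can force such vertices to be unhappy by attaching to each of them a small gadget of precoloured vertices and invoke Theorem~\ref{thm:2mhe} directly), and to its optimum add the number of vertices of $A_i$ all of whose $G$-neighbours lie in $A_i$.

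Correctness is then immediate: taking an optimal extended full $3$-colouring $c^\star$ and a colour $i$ whose uncoloured class $T^\star=(c^\star)^{-1}(i)\setminus S$ satisfies $|T^\star|\le n'/3$, the branch $(i,T^\star)$ solves its residual two-colour instance optimally and hence outputs a colouring at least as good as $c^\star$; conversely every colouring produced by the algorithm is a valid extended full $3$-colouring. Multiplying the $O^*(1.89^{n'})$ branches by the polynomial running time of the two-colour subroutine from Theorem~\ref{thm:2mhe} gives the claimed bound.

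The step I expect to require the most care is not conceptual but the reduction bookkeeping inside a branch: verifying that fixing one colour class really does leave a genuine (weighted, for \probkMHV) two-colour instance, correctly splitting off the constant contribution of the edges or vertices already decided inside $A_i$, and --- crucially for \probkMHV --- disabling exactly the vertices rendered permanently unhappy by their adjacency to $A_i$. The only other thing to check is the numeric estimate $2^{H(1/3)}<1.89$, which is what makes $\binom{n'}{\lfloor n'/3\rfloor}$ fall within the stated time bound.
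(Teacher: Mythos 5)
Your proposal is correct and follows essentially the same route as the paper: guess the smallest colour class among the uncoloured vertices (at most $n'/3$ of them, hence at most $2^{H(1/3)n'} < 1.89^{n'}$ candidate subsets) and finish each branch with the polynomial-time two-colour algorithm of Theorem~\ref{thm:2mhe}. Your treatment of the \probkMHV case is in fact somewhat more explicit than the paper's one-line remark (which just replaces $V(G)\setminus S_i$ by $V(G)\setminus N[S_i]$), since you spell out how to disable the vertices rendered permanently unhappy by adjacency to the guessed class.
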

\begin{proof}
First, consider the claim for an instance $I = (G,c,w,\ell)$ of \probkMHE.
Consider a partition $\mathcal{S} = (S_1,S_2,S_3)$ of
the uncolored vertices
into $k=3$ color classes that maximizes the number of happy edges.
In $\mathcal{S} \setminus S_3$, by the optimality of $\mathcal{S}$, it must be the case $S_1$ and $S_2$ have a minimum number of crossing edges.
Thus, we can proceed as follows.
Observe that in any optimal solution $\mathcal{S}$, there exists $S_i \in \mathcal{S}$ such that $|S_i| \leq n'/3$.
The number of subsets of size at most $n'/3$ is $2^{H(1/3)n'} < 1.89^{n'}$, using the well-known bound $2^{H(1/3)} < 1.89$, where $H(\cdot)$ is the binary entropy function (for a proof, see e.g.,~\cite[Lemma~3.13]{Fomin2010}).
Thus, we guess $S_i$ by extending it in all possible at most $1.89^{n'}$ ways.
Then, for every such partial coloring, we solve an instance of 2-MHE on the remaining uncolored vertices in polynomial time by Theorem~\ref{thm:2mhe}.
Combining the bounds, we obtain an algorithm running in time $O^*(1.89^{n'})$ for 3-MHE.

The observation is similar for 3-MHV, but we solve an instance of 2-MHV on $V(G) \setminus N[S_i]$ instead of $V(G) \setminus S_i$.
\end{proof}
By Lemma~\ref{lem:3mhe-exact} and by combining Theorem~\ref{thm:mwp} with Lemma~\ref{lem:reduction-mwp}, we arrive at the following.
\begin{theorem}
\label{thm:exact-algs}
For every $k \geq 3$, the problems \probkMHE and \probkMHV can be solved in time $O^*(2^{n'})$. When $k = 3$, the problems are solvable in time $O^*(1.89^{n'})$, where $n'$ is the number of uncolored vertices in the input graph.
\end{theorem}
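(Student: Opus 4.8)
The plan is to obtain the theorem as an immediate corollary of the two lemmas that precede it, treating the general-$k$ bound and the $k=3$ bound separately.

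For the $O^*(2^{n'})$ bound, I would start from an instance of \probkMHE (respectively \probkMHV), viewed as the special case of \probkWMHE (\probkWMHV) in which every edge (vertex) receives unit weight, and apply the polynomial-time reduction of Lemma~\ref{lem:reduction-mwp} to produce an equivalent instance of \probMaxWeightedPartition with ground set $N = V(G) \setminus S$, so that $|N| = n'$, and with $d = k$ parts. Then I would invoke the exponential-space branch of Theorem~\ref{thm:mwp}, which runs in $2^{|N|} d^2 M \cdot |N|^{O(1)}$ time. Since $k$ is a fixed constant, this is $2^{n'} \cdot M \cdot n^{O(1)}$, and to conclude $O^*(2^{n'})$ it only remains to observe that $M$ is polynomially bounded: for the unweighted edge version $M = \sum_{uv \in E(G)} w(uv) = m$, and for the unweighted vertex version $M = \sum_{v \in V(G)} w(v) = n$. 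Finally, to answer the decision question one compares the optimum value returned by the algorithm against the threshold $\ell$.

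For the sharper $O^*(1.89^{n'})$ bound when $k = 3$, there is nothing further to prove: this is exactly the statement of Lemma~\ref{lem:3mhe-exact}, which already covers both 3-MHE and 3-MHV, so the theorem follows by combining that lemma with the argument above.

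The only point that genuinely requires attention — and the one I would flag as the (mild) main obstacle — is the control of the parameter $M$: Theorem~\ref{thm:mwp} is efficient only when the functions $f_i$ take values in a range $[-M,M]$ with $M$ small, so one must make sure that restricting to the unweighted variants keeps $M$ polynomial in the input size. For arbitrary edge or vertex weights the same argument would yield only a pseudopolynomial dependence on $M$, which is precisely why the theorem is phrased for the unweighted problems \probkMHE and \probkMHV. Everything else is a direct consequence of the cited results.
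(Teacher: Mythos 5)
Your proposal is correct and takes essentially the same route as the paper, which derives the theorem in one line by combining Lemma~\ref{lem:reduction-mwp} with Theorem~\ref{thm:mwp} for the $O^*(2^{n'})$ bound and invoking Lemma~\ref{lem:3mhe-exact} for $k=3$. Your explicit verification that $M$ stays polynomial in the unweighted setting is a detail the paper leaves implicit, and it is indeed the one point worth checking.
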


\section{A linear kernel for \probkWMHE}
\label{sec:kernel}
In this section, we prove that \probkWMHE has a kernel of size $k + \ell$.
Our strategy to obtain the kernel consists of two parts: first, we will show that there is a polynomial-time algorithm for the problem where the uncolored vertices induce a forest.
Then, to leverage this algorithm, we apply a set of reduction rules that shrink the instance considerably, or solve it directly along the way.

\subsection{A linear-time algorithm for subproblems of \probkWMHE}
\label{polycase}
We show that the \probkWMHE problem is polynomial-time solvable when the uncolored vertices $V \setminus S$ induce a tree, where $S$ is the set of precolored vertices.
When $V \setminus S$ induces a forest, we run the algorithm for each component in $V \setminus S$ independently.
The approach we present is based on dynamic programming, and inspired by the algorithm given in~\cite{kmhvelintree}.

We define edges \emph{touching} a subtree to be those edges that have at least one endpoint in the subtree.
We choose any vertex $r \in V \setminus S$ as the root of the tree induced by $V \setminus S$.
The vertices of this rooted tree are processed according to its post-order traversal.
At each node, we keep $k$ values.
The $k$ values are defined as follows,
for $1 \leq i \leq k$:

\begin{itemize}
  \item $T_{v}[i]:$ The maximum total weight of the happy edges touching the subtree $T_{v}$, when the vertex
  $v$ is colored with color $i$.
\end{itemize}
We also define the following expressions:
\begin{itemize}
  \item $T_{v}[*]:$ The maximum total weight of the happy edges touching the subtree $T_{v}$, i.e.,
  \begin{equation}
  T_{v}[*] = \max_{i=1}^{k}\{T_{v}[i]\}.
\end{equation}
  \item $T_{v}[\overline \imath]:$ The maximum total weight of the happy edges touching the subtree $T_{v}$, when the vertex
  $v$ is colored with a color other than $i$, i.e.,
  \begin{equation}
  T_{v}[\overline \imath] = \max_{j =1,j \neq i}^{k} \{T_{v}[j]\}.
\end{equation}
\end{itemize}

If $W_p$ is the total weight of the happy edges in the initial partial coloring, $W_p + T_{r}[*]$ gives us
the maximum total weight of the happy edges in $G$. Now, we explain how to compute the values $T_{v}[i]$ for $1 \leq i \leq k$ and for each $v \in V \setminus S$.
When we say \emph{color-$i$} vertices, we mean the vertices precolored with color $i$.

For a leaf vertex $v \in V \setminus S$, let $v_{1}, v_{2}, \ldots, v_{x}$ be the color-$i$ neighbors of $v$ in $G$.
Then,
\begin{equation}
  T_{v}[i] = \sum_{j = 1}^{x} w(v v_{j}).
\end{equation}
If there are no color-$i$ neighbors for $v$, then $T_{v}[i]$ is set to $0$.

For a non-leaf vertex $v \in V \setminus S$, let $v_{1}, v_{2}, \ldots, v_{x}$ be the color-$i$ neighbors of $v$ in $G$ and let $u_{1}, u_{2}, \ldots, u_{d}$ be the children of $v$ in $V \setminus S$.
Then,

\begin{equation}
\label{mhenleq}
  T_{v}[i] = \sum_{j = 1}^{x} w(v v_{j}) + \sum_{j=1}^{d} \max \{(w(v u_{j}) + T_{u_{j}}[i]),  T_{u_{j}}[\overline \imath]\}.
\end{equation}

This naturally leads to an algorithm listed as Algorithm~\ref{alg:mhe}.
The running time of the algorithm is $O(k (m + n))$.
The correctness of the values $T_{v}[i]$, for $1 \leq i \leq k$ and for each $v \in V \setminus S$, implies the correctness of the algorithm.
The following theorem is proved by induction on the size of the subtrees.

\begin{algorithm}[t]
\renewcommand{\algorithmicrequire}{\textbf{Input:}}
\renewcommand{\algorithmicensure}{\textbf{Output:}}
\caption{Algorithm for a special case of \probkWMHE}\label{alg:mhe}
\begin{algorithmic}[1]
\Require A weighted undirected graph $G$ with $S \subseteq V$ precolored vertices under a partial vertex-coloring $c : V \to [k]$,
$V \setminus S$ induces a tree, and a vertex $r \in V \setminus S$ as the root of the tree.
\Ensure Maximum total weight of the happy edges in $G$.
    \State $M_{p} \gets 0$
    \ForAll{happy edge $uv$ in the precoloring}
       \State $M_{p} \gets M_{p} + w(uv)$
    \EndFor
    \ForAll{$v \in V \setminus S$ in post-order}
        \If {$v$ is a leaf vertex in $V \setminus S$}
            \For{$i = 1 \mbox{ to } k$}
                \State $T_{v}[i] \gets 0$
                \ForAll{$vu \in E$ such that $u \in S$ and $c(u) = i$}
                    \State $T_{v}[i] \gets T_{v}[i] + w(vu)$
                \EndFor
            \EndFor
        \Else
            \For{$i = 1 \mbox{ to } k$}
                \State $T_{v}[i] \gets 0$
                \ForAll{$vu \in E$ such that $u \in S$ and $c(u) = i$}
                    \State $T_{v}[i] \gets T_{v}[i] + w(vu)$
                \EndFor
                \ForAll{child $u$ of $v$ in $V \setminus S$}
                    \State $T_{v}[i] \gets T_{v}[i] + \max\{(w(vu)+T_{u}[i]), T_{u}[\overline \imath] \}$
                \EndFor
            \EndFor
        \EndIf
    \EndFor
    \State \Return ($M_{p} + T_{r}[*]$)
\end{algorithmic}
\end{algorithm}

\begin{theorem}
\label{thm:alg-correctness}
Algorithm~\ref{alg:mhe} correctly computes the values $T_{v}[i]$ for every
$v \in V \setminus S$ and $1 \leq i \leq k$.
\end{theorem}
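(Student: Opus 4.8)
The plan is to prove the theorem by induction on the number of vertices of the subtree $T_v$, processing vertices in the order of the post-order traversal used by the algorithm. First I would pin down precisely the quantity that $T_v[i]$ is meant to equal: the maximum, over all colorings of $V(T_v)$ that extend $c$ and assign color $i$ to $v$, of the total weight of the happy edges touching $T_v$, where the single edge joining $v$ to its parent in $V \setminus S$ is disregarded since its happiness is not yet determined. I would then observe that this quantity is well defined: every edge touching $T_v$ other than the parent edge either lies inside $T_v$ or joins a vertex of $T_v$ to a precolored vertex of $S$, because $V \setminus S$ induces a tree and hence no vertex of $T_v$ except $v$ has a neighbor in $(V \setminus S) \setminus T_v$; in either case the status of the edge depends only on the coloring of $T_v$.

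For the base case, let $v$ be a leaf of the tree induced by $V \setminus S$, so $T_v = \{v\}$. The only edges touching $T_v$ apart from the parent edge are the edges from $v$ to its precolored neighbors, and such an edge $vv_j$ is happy, for a coloring with $c'(v) = i$, exactly when $c(v_j) = i$. Summing their weights gives $\sum_{j=1}^{x} w(vv_j)$, which is exactly the value assigned by the algorithm in the leaf case.

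For the inductive step, let $v$ be a non-leaf vertex with color-$i$ precolored neighbors $v_1, \dots, v_x$ and children $u_1, \dots, u_d$ in $V \setminus S$. The subtrees $T_{u_1}, \dots, T_{u_d}$ are pairwise disjoint and, together with $v$, partition $V(T_v)$. I would classify the edges touching $T_v$ (other than $v$'s parent edge) into three disjoint groups: (i) the edges from $v$ to a precolored neighbor; (ii) the edges $vu_j$; and (iii) the edges touching some $T_{u_j}$ but distinct from $vu_j$. Group (iii) splits further across the $T_{u_j}$, and for fixed $c'(v) = i$ the happiness of the $j$-th part of group (iii) together with that of the edge $vu_j$ is governed solely by the coloring of $T_{u_j}$. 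Hence the maximum decomposes into the fixed term $\sum_{j=1}^{x} w(vv_j)$ (the contribution of group (i)) plus an independent maximization over each $T_{u_j}$: if $c'(u_j) = i$, the edge $vu_j$ is happy and by the induction hypothesis the best total over $T_{u_j}$ is $w(vu_j) + T_{u_j}[i]$; if $c'(u_j) \neq i$, the edge $vu_j$ is unhappy and the best is $\max_{j' \neq i} T_{u_j}[j'] = T_{u_j}[\overline\imath]$. Taking the larger of the two options and summing over $j$ reproduces exactly the right-hand side of \eqref{mhenleq}, which is the value computed by the algorithm; this closes the induction.

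The part I expect to require the most care is the edge bookkeeping in the inductive step: justifying, from the fact that $V \setminus S$ induces a tree, that the touching edges partition into groups whose happiness is controlled independently by the colors of $v$ and of the individual child subtrees, and that the parent edge is excluded consistently at every level so that it is counted exactly once---namely when the parent is handled. The remaining arguments amount to a routine disjoint-sum maximization. Finally, as a consequence, since the happy edges of $G$ are exactly the happy edges inside $S$ (of total weight $M_p = W_p$, fixed by $c$) together with the edges touching $T_r$ (the root has no parent edge), the returned value $M_p + T_r[*]$ equals the maximum total weight of happy edges in $G$.
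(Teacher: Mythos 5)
Your proof is correct and follows essentially the same route as the paper's: induction on the size of the subtrees rooted at the uncolored vertices, with the leaf case handled directly and the non-leaf case reduced to the recurrence~(\ref{mhenleq}) via the children's values. In fact your write-up is more careful than the paper's rather terse argument---in particular, making explicit that $T_v[i]$ must exclude the edge from $v$ to its parent (so that this edge is accounted for exactly once, at the parent's level) and that the tree structure of $V \setminus S$ is what makes the child maximizations independent---and these are precisely the points the paper glosses over.
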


\begin{proof}
We prove the theorem by using induction on the size of the subtrees.
For a leaf vertex $v$, the algorithm correctly computes the values $T_{v}[i]$ for $1 \leq i \leq k$.
For a non-leaf vertex $v$, let $u_{1}, u_{2}, \ldots, u_{d}$ be the children of $v$ in $V \setminus S$.
By induction, all the $k$ values associated with each child $u_{j}$
of $v$ are correctly computed.
Moreover, $T_{v}[i]$ is the sum of two quantities (see Equation~\ref{mhenleq}),
the first quantity is correct because it is the sum of the weights of the happy edges from $v$ to $S$.
If $T_{v}[i]$ is not correct, it will contradict the correctness of $T_{u_{j}}[*]$ for some child
$u_{j}$ of $v$. So, the second term in the $T_{v}[i]$ is correct. Hence, the algorithm correctly
computes the values $T_{v}[i]$ for every $v$ in $V \setminus S$ and $1 \leq i \leq k$.
\end{proof}

\subsection{Reduction rules combined with the algorithm: a kernel}
\label{rules}
In this subsection, we assume the edge weights of the \probkWMHE instance are positive integers.
The kernel will also work for real weights that are at least 1.
We present the following simple reduction rules.

\begin{rrule}
\label{rule:isolated}
If $G$ contains an isolated vertex, delete it.
\end{rrule}

\begin{rrule}
\label{rule:edge}
If both endpoints of an edge $uv \in E$ are colored, remove $uv$. Furthermore, if $c(u) = c(v)$, decrement $\ell$ by the weight on $uv$.
\end{rrule}

\begin{proof}
As both endpoints of $uv$ are colored, the existence of the edge $uv$ does not further contribute to the value of the optimal solution.
Moreover, if the edge is already happy under $c$, we can safely decrement $\ell$.
\end{proof}

\begin{rrule}
\label{rule:merge}
Contract every color class $C_i$ induced by the partial coloring $c$ into a single vertex. Let $e_1,\ldots,e_r$ be the (parallel) edges between two vertices $u$ and $v$. Delete each edge in $e_1,\ldots,e_r$ except for $e_1$, and update $w(e_1) = w(e_1) + w(e_2) + \cdots + w(e_r)$.
\end{rrule}

\begin{proof}
Let $G'$ be the resulting graph after the application of Rule~\ref{rule:merge}.
Because Rule~\ref{rule:edge} does not apply, each color class $C_i$ forms an independent set.
Thus, $G'$ contains no self-loops.

Fix a color $i$, and consider an uncolored vertex $v \in V \setminus C_i$.
Denote by $N_i(v)$ the neighbors of $v$ with color $i$, and denote by $E[X,Y]$ the set of edges whose one endpoint is in $X$ and the other in $Y$.
Depending on the color $v$ gets in an extended full coloring of $c$, either all edges in $E[\{v\}, N_i(v)]$ are happy or all are unhappy.
Hence, we can safely replace these edges with a single weighted edge.
\end{proof}

\begin{theorem}
\label{thm:mhe-kernel}
The problem \probkWMHE admits a kernel on $k + \ell $ vertices.
\end{theorem}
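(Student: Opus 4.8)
The plan is to spell out the kernelization algorithm and then prove both its correctness and the size bound. First I would apply Rules~\ref{rule:isolated}, \ref{rule:edge}, and~\ref{rule:merge} exhaustively, obtaining an instance $(G', c', w', \ell')$. After Rule~\ref{rule:edge} no edge joins two precolored vertices, so every color class is independent; hence Rule~\ref{rule:merge} collapses the precolored vertices into at most $k$ pairwise non-adjacent vertices, and by Rule~\ref{rule:isolated} no vertex of $G'$ is isolated. Next I would use the algorithm of Section~\ref{polycase} as a further reduction: for every connected component $K$ of $G'$ whose uncolored vertices induce a forest, solve \probkWMHE on $K$ in polynomial time, decrease $\ell'$ by the optimal happy weight of $K$, and delete $K$; this is safe since the happy edges of a disconnected graph split over its components. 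If $\ell' \le 0$ at any stage, output a trivial YES-instance of constant size; if no component remains while $\ell' > 0$, output a trivial NO-instance. After this step, every remaining component of $G'$ either contains a precolored vertex, or is precolor-free and, being connected but not a forest, contains a cycle.

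If the current graph has at most $k + \ell'$ vertices I would return it as the kernel; it is equivalent to the input since every applied reduction is safe, and its size is as required. Otherwise $|V(G')| > k + \ell'$, and since at most $k$ vertices are precolored, more than $\ell'$ vertices are uncolored; I claim the instance is then a YES-instance, and it suffices to exhibit an extended coloring making at least as many edges happy as there are uncolored vertices (all weights are at least $1$). I would argue component by component. For a precolor-free component $K$, which contains a cycle, we have $|E(K)| \ge |V(K)|$, so coloring all of $K$ with a single color makes $|E(K)| \ge |V(K)|$ edges happy. For a component $K$ containing a precolored vertex, let $P_1, \ldots, P_m$ be the connected components of the subgraph of $K$ induced by its uncolored vertices. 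Each $P_j$ has an edge to a precolored vertex: no edge among uncolored vertices leaves $P_j$, so if $P_j$ had no precolored neighbor it would be a whole component of $K$, contradicting that $K$ is connected and contains a precolored vertex. Coloring all of $P_j$ with the color of such a neighbor makes all $|E(P_j)| \ge |V(P_j)| - 1$ internal edges of $P_j$ happy, plus at least one edge leaving $P_j$. Across the $P_j$'s (and across components) the internal edge sets are disjoint, the chosen leaving edges are distinct, and a leaving edge is never internal; summing gives at least $\sum_j |V(P_j)|$ happy edges, i.e., at least the number of uncolored vertices of $K$. Summing over all components yields more than $\ell'$ happy edges, proving the claim.

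The main obstacle — and the reason the component-wise invocation of the Section~\ref{polycase} algorithm is essential rather than a single global check — is the all-uncolored tree component: on $t$ vertices it admits only $t-1$ happy edges, exactly one short of what the counting argument above requires, so without disposing of it the $k+\ell$ bound genuinely fails. For instance, a triangle together with several disjoint uncolored edges survives Rules~\ref{rule:isolated}--\ref{rule:merge}, has arbitrarily many vertices, yet few happy edges, so one cannot answer YES merely because the reduced graph is large. The remaining care points are routine: confirming the listed rules are safe and run in polynomial time (mostly done in the preceding proofs), checking that after Rule~\ref{rule:merge} at most $k$ precolored, pairwise non-adjacent vertices remain, and verifying the disjointness of the edges counted in the lower-bound argument.
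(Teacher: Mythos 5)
Your proof is correct and follows essentially the same route as the paper: the same three reduction rules, the forest algorithm of Section~\ref{polycase} used as a disposal step for tree-like uncolored parts, and a counting argument showing that a reduced instance on more than $k+\ell$ vertices must be a YES-instance. The only real divergence is in that final count --- the paper first eliminates \emph{every} tree component of the subgraph $H$ induced by the uncolored vertices and then concludes $|V(H)| \le |E(H)| < \ell$ outright, whereas you retain uncolored tree pieces that attach to precolored vertices and charge each such piece $P_j$ its $|V(P_j)|-1$ internal edges plus one attaching edge --- a slightly more elaborate but equally valid accounting that reaches the same bound.
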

\begin{proof}
Let $(G,c,w,\ell)$ be a reduced instance of \probkWMHE.
We claim that if $G$ has more than $k + \ell$ vertices, then we have YES-instance.
The proof follows by the claims below.

\begin{subclaim}
The weight of each edge is at most~$\ell$.
\end{subclaim}
\begin{subproof}
If an edge $uv$ has $w(uv) \geq \ell$ and at least one of $u$ and $v$ is uncolored, we make $uv$ happy and output YES.
On the other hand, any unhappy edge (with any weight) has been removed by Rule~\ref{rule:edge}.
\end{subproof}

\begin{subclaim}
The number of precolored vertices in $G$ is at most~$k$.
\end{subclaim}
\begin{subproof}
Follows directly from Rule~\ref{rule:merge}.
\end{subproof}

\begin{subclaim}
The number of uncolored vertices in $G$ is at most $\ell-1$.
\end{subclaim}
\begin{subproof}
Let $H$ be the graph induced by the uncolored vertices, i.e., $H = G[V \setminus \cup_{i \in [k]} C_i]$.
We note the following two cases:
\begin{itemize}
    \item If any of the connected components of $H$ is a tree, then we apply the procedure described in Sction~\ref{polycase} for that
component, and decrement the parameter $\ell$ accordingly.
    \item If $w(E(H)) \geq \ell$, then we color all the vertices in $H$ by the same color making all the
edges in $H$ happy. So the case where $w(E(H)) \geq \ell$ is a YES-instance.
\end{itemize}
After the application of the above, every component of $H$ contains a cycle, and $|E(H)| < \ell$.
So in each component of $H$, the number of vertices is at most
the number of edges. Consequently, we have $|V(H)| \leq |E(H)| < \ell$. Hence the number of uncolored vertices
is at most $\ell - 1$.

\end{subproof}

Clearly, all of the mentioned rules can be implemented to run in polynomial time.
Moreover, as we have bounded the number of precolored and uncolored vertices, the claimed kernel follows.
\end{proof}
By combining Theorem~\ref{thm:mhe-kernel} with Theorem~\ref{thm:exact-algs}, we have the following corollary.
\begin{corollary}
For every $k \geq 3$, \probkMHE can be solved in time $O^*(2^{\ell})$. For the special case of $k=3$, the problem admits an algorithm running in time $O^*(1.89^{\ell})$.
\end{corollary}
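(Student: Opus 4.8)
The plan is to chain the two previous results: kernelize with Theorem~\ref{thm:mhe-kernel}, then run the exact algorithm of Theorem~\ref{thm:exact-algs} on the resulting small instance. Concretely, given an instance $(G,c,\ell)$ of \probkMHE --- which we may regard as an instance of \probkWMHE with all weights equal to $1$ --- I would first apply the polynomial-time kernelization, producing an equivalent reduced instance $(G',c',w',\ell')$ in which $G'$ has at most $k+\ell'$ vertices, at most $\ell'-1$ of which are uncolored, and $\ell' \le \ell$ (every reduction rule only decrements the parameter, never increases it; and if the kernelization resolves the instance outright, we are already done). Since Rule~\ref{rule:merge} introduces nonunit weights, the kernel is a genuinely weighted instance, so I would invoke the exact algorithm in the weighted form supplied by Section~\ref{sec:exact}, i.e., through the reduction of Lemma~\ref{lem:reduction-mwp} to \probMaxWeightedPartition together with Theorem~\ref{thm:mwp}.

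Writing $n'$ for the number of uncolored vertices of $G'$, we have $n' \le \ell'-1 < \ell$, so the $O^*(2^{n'})$ bound of Theorem~\ref{thm:exact-algs} immediately gives a running time of $O^*(2^{\ell})$ on the kernel, where all polynomial factors hidden by $O^*$ are polynomials in $k+\ell$, the size of the kernel. For $k=3$, the same observation combined with the $O^*(1.89^{n'})$ bound of Lemma~\ref{lem:3mhe-exact} yields the claimed $O^*(1.89^{\ell})$ running time. Adding the polynomial kernelization time changes nothing.

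The one place that is not completely routine is verifying that the weight magnitudes stay polynomially bounded, so that the running time is truly $O^*(2^{\ell})$ and not $O^*(2^{\ell} M)$ with an uncontrolled $M$. Theorem~\ref{thm:mwp} runs in time $2^n d^2 M \cdot n^{O(1)}$, where $M$ bounds the range of the weight functions; in the reduction of Lemma~\ref{lem:reduction-mwp} one has $d = k$ and $M = \sum_{uv \in E(G')} w'(uv)$. By the first claim in the proof of Theorem~\ref{thm:mhe-kernel} every edge of the kernel has weight at most $\ell$, and the kernel has at most ${k+\ell \choose 2}$ edges, so $M \le \ell {k+\ell \choose 2}$, which is polynomial in $k+\ell$ and hence absorbed into $O^*$. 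For the $k=3$ refinement there is a similarly minor point: Lemma~\ref{lem:3mhe-exact} calls the algorithms of Theorem~\ref{thm:2mhe} for $2$-MHE/$2$-MHV as a subroutine, and one should observe these extend to weighted inputs at no asymptotic cost --- max-flow with edge capacities for weighted $2$-MHE, and submodular function minimization for weighted $2$-MHV --- which suffices since the kernel carries weights. With these bookkeeping observations the corollary follows, and I expect the control on $M$ (and the analogous remark for $k=3$) to be the only non-mechanical ingredient.
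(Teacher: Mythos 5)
Your proposal is exactly the paper's (unstated) argument: kernelize via Theorem~\ref{thm:mhe-kernel} to an instance with fewer than $\ell$ uncolored vertices, then run the exact algorithms of Theorem~\ref{thm:exact-algs} on the kernel. Your extra bookkeeping --- checking that the weight bound $M$ stays polynomial in $k+\ell$ and that the $k=3$ subroutine tolerates weights --- is a welcome tightening of details the paper leaves implicit, but it is the same route.
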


\section{Structural parameterization: density and sparsity}
\label{sec:tw-and-nd}
In this section we consider the happy coloring problems on sparse and dense graphs through a structural parameterization.

We begin by recalling a widely known measure for graph sparsity.
A \emph{tree decomposition} of $G$ is a pair
$(T,\{X_i : i\in I\})$
where $X_i \subseteq V$, $i\in I$, and $T$ is a tree with elements
of $I$ as nodes
such that:
\begin{enumerate}
\item for each edge $uv\in E$, there is an $i\in I$ such that $\{u,v\}
\subseteq X_i$, and
\item for each vertex $v\in V$, $T[\SB i\in I \SM v\in X_i \SE]$ is a (connected) tree with at least one node.
\end{enumerate}
The \emph{width} of a tree decomposition is $\max_{i \in I} |X_i|-1$.
The \emph{treewidth}~\cite{Robertson1986} of $G$
is the minimum width taken over all tree decompositions
of $G$ and it is denoted by $\tw(G)$.
For algorithmic purposes, it is convenient to consider the following \emph{nice tree decomposition} of a decomposition $(T,\{X_i : i\in I\})$ where every node $i \in I$ is one of the following types:
\begin{enumerate}
\item Leaf: node $i$ is a leaf of $T$ and $|X_i| = 1$.
\item Introduce: node $i$ has exactly one child $j$ and there is a vertex $v \in V$ with $X_i = X_j \cup \{ v \}$.
\item Forget: node $i$ has exactly one child $j$ and there is a vertex $v \in V$ with $X_j = X_i \cup \{ v \}$.
\item Join: node $i$ has exactly two children $j_1$ and $j_2$ and $X_i = X_{j_1} = X_{j_2}$.
\end{enumerate}
Every $n$-vertex graph $G$ has a nice tree decomposition with $O(n)$ nodes and width equal to $\tw(G)$.
Moreover, such a decomposition can be found in linear time if $\tw(G)$ is bounded (see e.g.,~\cite{Bodlaender1996}).

To show both weighted variants of happy coloring are tractable for bounded treewidth graphs, we proceed with a standard application of dynamic programming over a tree decomposition.
For more details, we refer the reader to~\cite{Bodlaender2008}.
\begin{theorem}
\label{thm:tw-alg}
For any $k \geq 1$, both \probkWMHE and \probkWMHV can be solved in time $k^t \cdot n^{O(1)}$, where $n$ is the number of vertices of the input graph and $t$ is its treewidth.
\end{theorem}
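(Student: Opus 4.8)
The plan is to give a standard dynamic programming algorithm over a nice tree decomposition of width $t$, treating \probkWMHE and \probkWMHV in parallel. First I would invoke the fact stated in the excerpt that an $n$-vertex graph admits a nice tree decomposition with $O(n)$ nodes and width $\tw(G)$, computable in linear time when the treewidth is bounded; this reduces the task to designing the DP table and the update rules at Leaf, Introduce, Forget, and Join nodes. For a node $i$ with bag $X_i$, the table will be indexed by an assignment $\phi : X_i \to [k]$ of colors to the bag vertices that is consistent with the partial precoloring $c$ (i.e.\ $\phi(v) = c(v)$ for $v \in X_i \cap S$); there are at most $k^{|X_i|} \le k^{t+1}$ such assignments, which is where the claimed $k^t \cdot n^{O(1)}$ running time comes from.

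The subtlety is deciding what each table entry stores, and here the two problems differ slightly. Let $V_i$ denote the union of all bags in the subtree of $T$ rooted at $i$, and let $G_i = G[V_i]$. For \probkWMHE, the entry $A_i[\phi]$ will be the maximum total weight of happy edges of $G_i$ that have \emph{both} endpoints in $V_i$, taken over all extended colorings of $G_i$ agreeing with $\phi$ on $X_i$; since an edge leaves $G_i$ only by having an endpoint in the current bag (by the connectivity property of tree decompositions), edges not yet counted will be accounted for exactly when their second endpoint is introduced. For \probkWMHV, the entry $A_i[\phi]$ will be the maximum total weight of vertices of $V_i \setminus X_i$ that are \emph{already determined to be happy} — that is, all of their incident edges (necessarily internal to $G_i$) are monochromatic — again over colorings agreeing with $\phi$; a vertex's happiness cannot be decided while it is still in a bag, because it may acquire neighbors later, but once it is forgotten all its neighbors have appeared, so its status is final. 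In both cases the final answer is read off from the (single) table at the root, adding in the weight of happy edges already forced by $c$ where appropriate.

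Next I would write down the four update rules. At a Leaf node the table is trivial (one vertex, no internal edges, and that vertex is not yet counted for \probkWMHV). At an Introduce node adding vertex $v$ to bag $X_j$: for \probkWMHE we add to the inherited value $A_j[\phi|_{X_j}]$ the weights $w(uv)$ over neighbors $u \in X_j$ of $v$ with $\phi(u) = \phi(v)$; for \probkWMHV we simply copy $A_j[\phi|_{X_j}]$ (no forgotten vertex changes status). At a Forget node removing $v$ from the child bag: for \probkWMHE we take $A_i[\phi] = \max_{a \in [k]} A_j[\phi \cup \{v \mapsto a\}]$; for \probkWMHV we do the same maximization but, for each candidate color $a$, first check whether $v$ is happy under that extension — i.e.\ every neighbor of $v$ in $G_j$ is colored $a$ — which is determined purely by the bag assignment plus bookkeeping, and if so add $w(v)$. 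At a Join node with children $j_1, j_2$ sharing bag $X_i$: for \probkWMHE we set $A_i[\phi] = A_{j_1}[\phi] + A_{j_2}[\phi] - (\text{weight of happy edges inside }X_i\text{ under }\phi)$, subtracting the double-counted bag-internal edges; for \probkWMHV the forgotten-vertex sets of the two subtrees are disjoint, so $A_i[\phi] = A_{j_1}[\phi] + A_{j_2}[\phi]$, and happiness of a bag vertex is still pending so nothing is added.

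The main obstacle — really the only place care is needed — is the accounting at Join and Forget nodes: ensuring that each happy edge (resp.\ each happy vertex) is counted exactly once and that partial happiness information carried in the bag is correctly finalized. The key invariant, justified by the second (connectivity) axiom of tree decompositions, is that once a vertex is forgotten it never reappears in any bag, so all of its incident edges have already been processed; this makes the Forget-node check for vertex happiness well defined and makes the Join-node subtraction the unique correction needed. Correctness then follows by a routine induction on the structure of $T$, and since each of the $O(n)$ nodes is processed in $k^{t+1} \cdot n^{O(1)}$ time, the total running time is $k^t \cdot n^{O(1)}$ as claimed.
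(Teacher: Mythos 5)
Your proposal is correct and follows essentially the same route as the paper: a dynamic program over a nice tree decomposition with tables indexed by $k$-colorings of the bag, the identical update rules at Leaf, Introduce, Forget, and Join nodes (including the subtraction of doubly counted bag-internal happy edges at Join), and the same device of deferring a vertex's happiness decision for \probkWMHV until it is forgotten, using an extra per-vertex bookkeeping bit. The only difference is presentational: you make the edge/vertex accounting invariants slightly more explicit than the paper does.
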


\begin{proof}
Let us prove the statement for \probkWMHE, and then explain how the proof extends for \probkWMHV.
Let $(G,c,w,\ell)$ be an instance of \probkWMHE, let $(\{X_i \mid i \in I\}, T=(I,F))$ be a nice tree decomposition of $G$ of width $t$, and let $r$ be the root of $T$.
Moreover, denote by $G_i$ the subgraph of $G$ induced by $\bigcup_j X_j$ where $j$ belongs to the subtree of $T$ rooted at $i$.

For every node $i$ of $T$ we set up a table $K_i$ indexed by all possible extended full $k$-colorings of $X_i$.
Intuitively, an entry of $K_i$ indexed by $f : X_i \to [k]$ gives the total weight of edges happy in $G_i$ under $f$.
It holds that an optimal solution is given by $\max_f \{ K_r[f] \}$.
In what follows, we detail the construction of the tables $K_i$ for every node $i$.
The algorithm processes the nodes of $T$ in a post-order manner, so when processing $i$, a table has been computed for all children of $i$.
\begin{itemize}
\item \textbf{Leaf node.} Let $i$ be a leaf node and $X_i = \{ v \}$. Obviously, $G_i$ is edge-free, so we have $K_i[f] = 0$.
As $k$ is fixed, $K_i$ is computed in constant time.
\item \textbf{Introduce node.} Let $i$ be an introduce node with child $j$ such that $X_i = X_j \cup \{ v \}$.
Put differently, $G_i$ is formed from $G_j$ by adding $v$ and a number of edges from $v$ to vertices in $X_j$.
The properties of a tree decomposition guarantee that $v \notin V(G_j)$, and that $v$ is not adjacent to a vertex in $V(G_j) \setminus X_j$.
It is not difficult to see that we set $K_i[f] = K_i[f \vert_{X_j}] + \sum_{p \in N_h(v)} w(pv)$, where $N_h(v)$ denotes the neighbors of $v$ colored with the same color as $v$.
It follows $K_i$ can be computed in time $O(k^{t+1})$.
\item \textbf{Forget node.} Let $i$ be a forget node with child $j$ such that $X_i = X_j \setminus \{ v \}$.
Observe that the graphs $G_i$ and $G_j$ are the same.
Thus, we set $K_i[f]$ to the maximum of $K_j[f']$ where $f' \vert_{X_i} = f$.
Since there are at most $k$ such colorings $f'$ for each $f$, we compute $K_i$ in time $O(k^{t+2})$.
\item \textbf{Join node.} Let $i$ be a join node with children $j_1$ and $j_2$ such that $X_i = X_{j_1} = X_{j_2}$.
The properties of a tree decomposition guarantee that $V(G_{j_1}) \cap V(G_{j_2}) = X_i$, and that no vertex in $V(G_{j_1}) \setminus X_i$ is adjacent to a vertex in $V(G_{j_2}) \setminus X_i$.
Thus, we add together weights of happy edges that appear in $G_{j_1}$ and $G_{j_2}$, while subtracting a term guaranteeing we do not add weights of edges that are happy in both subgraphs.
Indeed, we set $K_i[f] = K_{j_1}[f] + K_{j_2}[f] - q$, where $q$ is the total weight of the edges made happy under $f$ in~$X_i$.
The table $K_i[f]$ can also be computed in time $O(k^{t+2})$.
\end{itemize}
To summarize, each table $K_i$ has size bounded by $k^{t+1}$.
Moreover, as each table is computed in $O(k^{t+2})$ time, the algorithm runs in $k^t \cdot n^{O(1)}$ time, which is what we wanted to show.

The proof is similar for \probkWMHV, but each table now stores the total weight of the happy vertices under an extended full $k$-coloring.
In addition, for each vertex $v$ in a bag, we also store a bit indicating whether or not all forgotten neighbors of $v$ share its color, i.e., whether $v$ can still be made happy.
\end{proof}

The polynomial-time solvability of a problem on bounded treewidth graphs implies the existence of a polynomial-time algorithm also for other structural parameters that are polynomially upper-bounded in treewidth.
For instance, one such parameter is the \emph{vertex cover number}, i.e., the size of a smallest vertex cover that a graph has.
However, graphs with bounded vertex cover number are highly restricted, and it is natural to look for less restricting parameters that generalize vertex cover (like treewidth).
Another parameter generalizing vertex cover is \emph{neighborhood diversity}, introduced by Lampis~\cite{Lampis2012}.
Let us first define the parameter, and then discuss its connection to both vertex cover and treewidth.

\begin{definition}
In an undirected graph $G$, two vertices $u$ and $v$ have the same type
if and only if $N(u) \setminus \{ v\} = N(v) \setminus \{ u \}$.
\end{definition}

\begin{definition}[Neighborhood diversity~\cite{Lampis2012}]
A graph $G$ has neighborhood diversity $t$ if there exists a partition of $V(G)$ into $t$ sets $P_1,P_2,\ldots,P_t$ such that all the vertices in each set have the same type. Such a partition is called a type partition. Moreover, it can be computed in linear time.
\end{definition}
Note that all the vertices in $P_i$ for every $i \in [t]$ have the same neighborhood in~$G$. Moreover, each $P_i$ either forms a clique or an independent set in~$G$.

Neighborhood diversity can be viewed as representing the simplest of dense graphs.
If a graph has vertex cover number $d$, then the neighborhood diversity of the graph is not more than $2^d + d$ (for a proof, see~\cite{Lampis2012}).
Hence, graphs with bounded vertex cover number also have bounded neighborhood diversity.
However, the opposite is not true since complete graphs have neighborhood diversity~1.
Paths and complete graphs also show that neighborhood diversity is incomparable with treewidth.
In general, some NP-hard problems (some of which remain hard for treewidth), are rendered tractable for bounded neighborhood diversity (see e.g.,~\cite{Ganian2012,Gargano2015,Fiala2016}).

We proceed to present algorithms for \probkMHE and \probkMHV for graphs of bounded neighborhood diversity.
Consider a type partition of a graph $G$ with $t$ sets, and an instance of $I = (G,c,\ell)$ of \probkMHE.
If a set contains both precolored and uncolored vertices, we split the set into two sets: one containing precisely the precolored vertices and the other precisely the uncolored vertices.
After splitting each set, the number of sets is at most~$2t$.
For convenience, we say a set is \emph{uncolored} if each vertex in it is uncolored; otherwise the set is \emph{precolored}.
Let the uncolored sets be $P_1, P_2, \ldots, P_t$.
In what follows, we discuss how vertices in these sets are colored in an optimal solution.
We say a set is \emph{monochromatic} if all of its vertices have the same color.

\begin{figure}[t]
\centering
\includegraphics[scale=1.25]{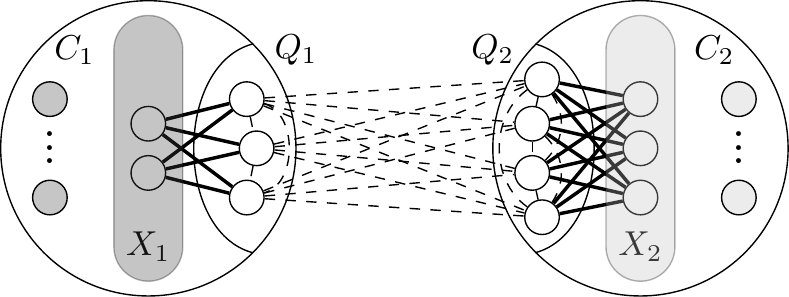}
\caption{A set of a type partition, where each vertex in $Q_1 \cup Q_2$ has the same type. The dashed edges appear exactly when $Q_1 \cup Q_2$ induces a clique. The set $Q_1$ forms a complete bipartite graph with both $X_1$ and $X_2$; likewise for $Q_2$ (edges omitted for brevity).
}
\label{fig:nd}
\end{figure}

\begin{lemma}
\label{lem:ndkmhe}
There is an optimal extended full coloring for an instance~$I$ of \probkMHE such that each uncolored set $P_i$ for $1 \leq i \leq t$ is monochromatic.
\end{lemma}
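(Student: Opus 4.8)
The plan is to use a straightforward exchange argument: starting from an arbitrary optimal extended full coloring, I would show that each uncolored set can be made monochromatic one at a time without ever decreasing the number of happy edges. Fix an optimal coloring $c^\ast$ and suppose some uncolored set $P_i$ is not monochromatic under $c^\ast$. First I would isolate the edges whose happiness can be affected by recoloring only the vertices of $P_i$; these are exactly the edges with at least one endpoint in $P_i$. Since all vertices of $P_i$ have the same type, they share a common set of neighbors outside $P_i$; call it $A$, and for each color $j \in [k]$ let $a_j$ denote the number of vertices of $A$ that receive color $j$ under $c^\ast$. Then for a vertex $v \in P_i$ with $c^\ast(v) = j$, the number of happy edges from $v$ to $A$ equals $a_j$, so the total number of happy edges between $P_i$ and $A$ is $\sum_{v \in P_i} a_{c^\ast(v)}$.

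Next I would handle the edges internal to $P_i$. By the properties of a type partition, $P_i$ is either an independent set --- in which case there are no internal edges --- or a clique, in which case a coloring of $P_i$ produces at most $\binom{|P_i|}{2}$ happy internal edges, with equality precisely when $P_i$ is monochromatic. Now recolor every vertex of $P_i$ with a color $j^\ast$ maximizing $a_{j^\ast}$. The number of happy internal edges does not decrease (it becomes the maximum possible), and the number of happy edges to $A$ becomes $|P_i| \cdot a_{j^\ast} \ge \sum_{v \in P_i} a_{c^\ast(v)}$ because $a_{j^\ast} \ge a_{c^\ast(v)}$ for every $v \in P_i$. All other edges are untouched. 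Hence the modified coloring is again optimal, and $P_i$ is now monochromatic.

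Finally I would observe that recoloring $P_i$ changes the colors of vertices in $P_i$ only, so it cannot destroy the monochromaticity of any uncolored set processed earlier. Iterating over the at most $t$ uncolored sets therefore yields an optimal extended full coloring in which every uncolored set is monochromatic. Since each $P_i$ is uncolored, all of these recolorings respect the precoloring $c$, so the resulting coloring is a legitimate extended full coloring of $c$, as required.

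I expect the only real subtlety --- rather than a genuine obstacle --- to be the bookkeeping around the common external neighborhood $A$: in particular, one must verify that the ``same type'' condition $N(u)\setminus\{v\} = N(v)\setminus\{u\}$ really does give all vertices of $P_i$ identical neighbors outside $P_i$, and that the clique/independent-set dichotomy handles the internal edges correctly. The argument itself is a one-shot local exchange and needs no induction beyond counting the number of sets processed.
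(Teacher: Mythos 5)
Your proof is correct and follows essentially the same exchange argument as the paper: both exploit the fact that all vertices of $P_i$ share the same external neighborhood and recolor within $P_i$ toward the color best represented among those neighbors, noting that internal edges (clique case) can only gain. The only cosmetic difference is that you recolor all of $P_i$ at once to the majority external color, whereas the paper merges two color classes of $P_i$ at a time; your one-shot version handles sets split across three or more color classes without needing repeated application.
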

\begin{proof}
Consider any optimal extended full coloring for an instance $I$.
Suppose the vertices in a set $P_i$ belong to more than one color class.
Let $Q_1$ and $Q_2$ be the (disjoint and non-empty) sets of vertices of $P_i$ belonging to color classes $C_1$ and $C_2$, respectively.
Let $X_1$ and $X_2$ be the neighbors of the vertices in
$Q_1$ and $Q_2$ in color classes $C_1$ and $C_2$, respectively, as shown in Figure~\ref{fig:nd}.
Without loss of generality, let us assume that $|X_1| \leq |X_2|$.
By recoloring vertices in $Q_1$ with the color of $C_2$, we retain an optimal solution without disturbing the colors of other vertices.
If $|E(Q_1, Q_2)|$ is the number of edges between $Q_1$ and $Q_2$,
the gain in the number of happy edges by recoloring $Q_1$ is $|E(Q_1, Q_2)| + |Q_1|(|X_2|-|X_1|)$,
which is strictly positive if $P_i$ is a clique and non-negative if $P_i$ is an independent set.

In conclusion, we have shown that every optimal extended full coloring makes each $P_i$ inducing a clique monochromatic.
Moreover, there is an optimal extended full coloring making each $P_i$ inducing an independent set monochromatic.
\end{proof}

\begin{theorem}
\label{thm:ndkmhe}
For any $k \geq 1$, \probkMHE can be solved in time $O^{*}(2^t)$, where $t$ is the neighborhood diversity of the input graph.
\end{theorem}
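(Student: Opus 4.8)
The plan is to reduce the problem to \probMaxWeightedPartition on a ground set of size at most $t$ and then invoke Theorem~\ref{thm:mwp}. First I would compute a type partition of $G$ into $t$ sets (linear time, by Lampis), split each set that mixes precolored and uncolored vertices into its precolored and uncolored parts (this only refines the partition, leaving $G$ untouched), and keep the uncolored sets, say $P_1,\dots,P_{t'}$ with $t'\le t$. Recall that in a type partition any two distinct classes are either completely joined or completely non-adjacent; write $P_i\sim P_j$ for the former. By Lemma~\ref{lem:ndkmhe} we may restrict attention to extended full colorings in which every $P_i$ is monochromatic, so the task becomes: choose a color $c_i\in[k]$ for each $P_i$ so as to maximize the number of happy edges. (As usual we may assume $k\le n$, since using more than $n$ colors never helps.)

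The crux is to observe that the number of happy edges is a \emph{separable} function of the color classes. Letting $U_c=\{\,i : c_i=c\,\}$, the quantity to be maximized equals a fixed constant (the number of happy edges with both endpoints precolored, computable in polynomial time) plus $\sum_{c=1}^{k} f_c(U_c)$, where for $U'\subseteq\{1,\dots,t'\}$ we set
\begin{equation}
f_c(U') \;=\; \sum_{i\in U'}\Bigl(|P_i|\,g_i(c) + \tbinom{|P_i|}{2}\cdot[\,P_i\text{ induces a clique}\,]\Bigr) \;+\; \sum_{\{i,j\}\subseteq U':\,P_i\sim P_j} |P_i|\,|P_j|,
\end{equation}
and $g_i(c)$ is the number of precolored neighbours (of any one vertex) of $P_i$ having color $c$. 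One checks this charges every happy edge exactly once: an internal clique edge of $P_i$, and an edge from $P_i$ to a precolored vertex, go to the unique $f_c$ with $i\in U_c$; an edge between $P_i$ and $P_j$ goes to $f_c$ precisely when $c_i=c_j=c$; and edges between uncolored sets of different colors are correctly not counted. Since each value $f_c(U')$ is a nonnegative integer bounded by $|E(G)|\le\binom{n}{2}$ and is evaluable in polynomial time, this is exactly an instance of \probMaxWeightedPartition with ground set of size $t'\le t$, with $d=k$, and with $M=\binom{n}{2}$.

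Finally I would precompute the $2^{t'}$ table values of each $f_c$ (within the claimed budget) and apply the first bound of Theorem~\ref{thm:mwp}, which solves the instance in $2^{t'}k^2 M\cdot (t')^{O(1)}=O^*(2^t)$ time; adding back the constant gives the optimum, and an optimal coloring is read off from an optimal partition. The main obstacle I anticipate is purely bookkeeping: getting the displayed definition of $f_c$ right, i.e.\ using the all‑or‑nothing adjacency between distinct type classes so that each happy edge is counted exactly once while keeping the range of $f_c$ polynomially bounded. Once that is settled, the time bound is an immediate consequence of Theorem~\ref{thm:mwp} (and one must only note $k\le n$ so that the $k^2 M$ factor is polynomial in the input size).
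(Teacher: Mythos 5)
Your proposal is correct and follows essentially the same route as the paper: both rest on Lemma~\ref{lem:ndkmhe} to restrict to colorings that are monochromatic on each uncolored type class, and both then run the Bj\"{o}rklund--Husfeldt--Koivisto partition algorithm over the at most $t$ contracted classes. The only difference is packaging --- the paper contracts the classes into a weighted \probkWMHE instance on $O(kt)$ vertices with $t$ uncolored vertices and invokes Theorem~\ref{thm:exact-algs} as a black box, whereas you inline that step by writing the \probMaxWeightedPartition functions $f_c$ explicitly --- and your accounting of the happy edges is sound.
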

\begin{proof}
First we construct a weighted graph $H$ from $G$ as follows:
merge each uncolored set into a single vertex.
Within a precolored set (i.e., a set that is not uncolored), merge vertices of the same color.
This merging operation may create parallel edges and self-loops in $H$.
Discard all self-loops in $H$.
Replace all parallel edges with a single weighted edge with weight equivalent to the number edges between the corresponding vertices.
Edges between the vertices in $G$ that are merged to the same vertex are treated as happy, as there is an optimal extended full coloring where the merged vertices are colored the same by Lemma~\ref{lem:ndkmhe}.
Clearly, $H$ has at most $t + kt$ vertices in which $t$ vertices are uncolored.

Now, \probkMHE on $G$ is converted to an instance of \probkWMHE on $H$.
By using Theorem~\ref{thm:exact-algs}, we can solve the instance of \probkMHE on $G$ in time $O^*(2^t)$.
\end{proof}

Using arguments similar to Lemma~\ref{lem:ndkmhe} we can state the following lemma.
This time, let $I = (G,c,\ell)$ be an instance of \probkMHV.

\begin{lemma}
\label{lem:ndkmhv}
There is an optimal extended full coloring for an instance~$I$ of \probkMHV such that each uncolored set $P_i$ for $1 \leq i \leq t$ is monochromatic.
\end{lemma}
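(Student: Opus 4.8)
The plan is to mimic the proof of Lemma~\ref{lem:ndkmhe}, replacing the counting of happy edges with the counting of happy vertices. Fix an optimal extended full coloring, and suppose some uncolored set $P_i$ is not monochromatic. As in the edge case, let $Q_1$ and $Q_2$ be two non-empty blocks of $P_i$ lying in color classes $C_1$ and $C_2$, and let $X_1$ (resp. $X_2$) be the set of vertices outside $P_i$ in color class $C_1$ (resp. $C_2$) that are neighbors of $Q_1$ (resp. $Q_2$) — since all vertices of $P_i$ share a type, every vertex of $P_i$ has the same outside neighborhood, so $X_1$ and $X_2$ are simply that common outside neighborhood intersected with $C_1$ and $C_2$. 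Without loss of generality assume $|X_1| \le |X_2|$, and consider recoloring all of $Q_1$ with the color of $C_2$. I would then argue this does not decrease the number of happy vertices.

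The key step is the bookkeeping of which vertices can change happiness status under this recoloring. Only vertices in $Q_1 \cup Q_2 \cup X_1 \cup X_2$ are affected, since no other vertex has a neighbor whose color changed (and the vertices in $Q_1$ themselves change color). First, every vertex in $X_1$ that was happy before must already have had all of $N[P_i]$ — in particular the vertices of $Q_2$ — in color $C_1$; but $Q_2$ is non-empty and coloured $C_2 \ne C_1$, so in fact $X_1$ contained no happy vertices to begin with, and recoloring $Q_1$ to $C_2$ can only help. Symmetrically, vertices in $X_2$ were already unhappy (because of $Q_1$), and after recoloring each of them loses the obstruction coming from $Q_1$, so their happiness can only improve. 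For the vertices inside $P_i$: if $P_i$ is an independent set, a vertex of $Q_1 \cup Q_2$ is happy iff its entire outside neighborhood is monochromatic in its own color, and after the move every vertex of $Q_1 \cup Q_2$ needs its outside neighborhood to lie in $C_2$; the vertices of $Q_2$ are unaffected, and the vertices of $Q_1$ can only gain, since $Q_1$ had no internal edges. If $P_i$ induces a clique, a vertex of $P_i$ can be happy only if all of $P_i$ is monochromatic, which it is not, so no vertex of $P_i$ is happy before or after the move and nothing is lost there either. Summing these contributions shows the total number of happy vertices does not decrease, so the recolored coloring is still optimal.

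Finally, I would iterate: each such recoloring strictly reduces the number of distinct colors appearing on $P_i$ (the colors used on $P_i$ after the move are a proper subset of those used before, since $C_1$ disappears from $P_i$), so after finitely many steps every uncolored set is monochromatic, and no step ever decreases the objective; this yields the claimed optimal monochromatic coloring. The main obstacle I anticipate is getting the happy-vertex accounting exactly right at the boundary sets $X_1$ and $X_2$ — in particular being careful that $X_1$ genuinely contains no happy vertex before the move (which hinges on $Q_2$ being non-empty and on all of $P_i$ sharing the same outside neighborhood), and handling the clique-versus-independent-set dichotomy for $P_i$ uniformly. Unlike the edge case, there is no clean ``strictly positive gain'' formula, so the argument is by a case analysis showing the gain is merely non-negative, which suffices for the existence statement in the lemma.
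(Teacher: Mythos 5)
Your proposal is correct and follows essentially the same route as the paper: fix an optimal coloring, take two colour blocks $Q_1,Q_2$ of a non-monochromatic $P_i$ with $|X_1|\le|X_2|$, recolor $Q_1$ into the colour of $C_2$, and verify that no vertex loses happiness --- your case analysis is in fact more detailed than the paper's one-line accounting. The only sub-step whose stated justification is shaky is the claim that $Q_1$-vertices in an independent $P_i$ ``can only gain, since $Q_1$ had no internal edges'': the actual reason they cannot lose is the assumption $|X_1|\le|X_2|$, which rules out the common outside neighbourhood being non-empty and entirely $C_1$-coloured, the only situation in which a $Q_1$-vertex could be happy before the move and unhappy after.
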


\begin{proof}
Consider any optimal extended full coloring for an instance $I$.
Suppose the vertices in a set $P_i$ belong to more than one color class.
Let $Q_1$ and $Q_2$ be the (disjoint and non-empty) sets of vertices of $P_i$ belonging to color classes $C_1$ and $C_2$, respectively.
Let $X_1$ and $X_2$ be the neighbors of the vertices in
$Q_1$ and $Q_2$ in color classes $C_1$ and $C_2$, respectively, as shown in Figure~\ref{fig:nd}.
Without loss of generality, let us assume that $|X_1| \leq |X_2|$.
By recoloring $Q_1$ with the color of $C_2$, we get an optimal solution as well without disturbing the
colors of other vertices. The gain in the number of happy vertices by recoloring $Q_1$ is at most $|X_2| \geq 0$.
This proves that there is an optimal extended full coloring where each set $P_i$ is monochromatic.
\end{proof}

Using a construction similar to Theorem~\ref{thm:ndkmhe} we prove the following theorem.

\begin{theorem}
For any $k \geq 1$, \probkMHV can be solved in time $O^{*}(2^t)$, where $t$ is the neighborhood diversity of the input graph.
\label{thm:ndkmhv}
\end{theorem}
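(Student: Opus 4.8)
The plan is to mirror the proof of Theorem~\ref{thm:ndkmhe} almost verbatim, substituting the vertex-happiness bookkeeping for the edge-happiness bookkeeping. First I would invoke Lemma~\ref{lem:ndkmhv} to assert that it suffices to search over extended full colorings in which every uncolored set $P_i$ is monochromatic; this reduces the search space from $k^{n}$ colorings to something governed by the type partition rather than by $n$. As in Theorem~\ref{thm:ndkmhe}, I would first split each type set containing both precolored and uncolored vertices into its precolored part and its uncolored part, yielding at most $2t$ sets, with at most $t$ of them uncolored.

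Next I would build a contracted weighted graph $H$ from $G$: merge each uncolored set into a single vertex, and within each precolored set merge all vertices of a common color into one vertex. The key difference from the edge version is that here the objects being counted are vertices, so I would attach to each merged vertex of $H$ a weight equal to the number of original vertices it represents, and the question becomes a \probkWMHV instance on $H$. I would need to argue that happiness is preserved under this contraction: a merged uncolored vertex $P_i$ of $H$ is happy exactly when, in the corresponding monochromatic coloring of $G$, every original vertex of $P_i$ has all its neighbors in the same color class — and since all vertices of $P_i$ share a neighborhood (by the definition of type) and are colored identically (by Lemma~\ref{lem:ndkmhv}), they are simultaneously happy or simultaneously unhappy, so the weight $|P_i|$ correctly accounts for all of them at once. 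The same simultaneity holds for the merged precolored vertices. Self-loops created by the merge are irrelevant to vertex happiness and parallel edges can be collapsed without affecting which vertices are happy (only adjacency matters), so $H$ has at most $t + kt$ vertices, of which at most $t$ are uncolored.

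Finally, I would solve the resulting \probkWMHV instance on $H$ using Theorem~\ref{thm:exact-algs}, whose running time is $O^{*}(2^{n'})$ in the number $n'$ of \emph{uncolored} vertices; since $H$ has at most $t$ uncolored vertices, this gives the claimed $O^{*}(2^{t})$ bound overall, the polynomial factors absorbing the linear-time computation of the type partition and of $H$.

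The main obstacle I anticipate is the correctness argument for the contraction in the vertex setting — specifically, making sure that a happy vertex of $H$ corresponds to the right count of happy vertices of $G$ even when an uncolored set $P_i$ induces a clique: then each vertex of $P_i$ is adjacent to the others, so for $P_i$'s vertices to be happy one genuinely needs the whole set monochromatic, which is precisely what Lemma~\ref{lem:ndkmhv} grants, but one must also check that no \emph{precolored} vertex's happiness is silently lost when its set is split or merged. Handling the edges between two vertices of $G$ that get merged into the same vertex of $H$ (they vanish as self-loops, yet they are ``internally happy'' and impose no constraint) is the subtle point to state cleanly; everything else is routine.
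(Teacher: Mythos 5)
Your proposal follows essentially the same route as the paper: split the type sets, invoke Lemma~\ref{lem:ndkmhv} to restrict to monochromatic uncolored sets, contract $G$ into a vertex-weighted graph $H$ with at most $t+kt$ vertices of which at most $t$ are uncolored, and solve the resulting \probkWMHV instance via Theorem~\ref{thm:exact-algs} in $O^*(2^t)$ time. Your additional care about self-loops, clique-type sets, and the preservation of precolored vertices' happiness is a correct elaboration of details the paper leaves implicit.
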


\begin{proof}
First we construct a weighted graph $H$ from $G$ as follows:
merge each uncolored set into a single vertex.
Within a precolored set, merge vertices of same color.
We assign a weight for each vertex of $H$ equivalent to the
number of vertices in $G$ that are merged to the vertex in $H$.
To make $H$ simple, discard all parallel edges and self-loops.
Clearly, $H$ has at most $t + kt$ vertices in which~$t$ vertices are uncolored.

Now, \probkMHV on $G$ is converted to an instance of \probkWMHV on~$H$.
By using Theorem~\ref{thm:exact-algs}, we can solve the \probkMHV on~$G$
in time $O^*(2^t)$.
\end{proof}

\section{Conclusions}
\label{sec:conclusions}
We further studied the algorithmic aspects of homophily in networks.
As explained, the positive results for \probkWMHE also imply tractability results for \probMultiwayCut, a problem studied by Langberg et al.~\cite{mwuc2006}.
Furthermore, our work invites for a more systematic study of the complexity of happy coloring for various structural parameters.
From a parameterized perspective, an obvious open question is whether \probkWMHV (or even its unweighted variant) admits a polynomial kernel.
We believe the answer is positive, and leave this for further work.

\bibliographystyle{splncs}
\bibliography{BibFile}

\begin{thebibliography}{10}

\bibitem{homph}
Easley, D., Kleinberg, J.:
\newblock Networks, Crowds, and Markets: Reasoning About a Highly Connected
  World.
\newblock Cambridge University Press (2010)

\bibitem{McPherson2001}
McPherson, M., Smith-Lovin, L., Cook, J.M.:
\newblock {Birds of a Feather: Homophily in Social Networks}.
\newblock {Annual Review of Sociology} \textbf{27}(1) (2001)  415--444

\bibitem{Lazarsfeld1954}
Lazarsfeld, P.F., Merton, R.K.:
\newblock Friendship as a social process: A substantive and methodological
  analysis.
\newblock {Freedom and Control in Modern Society} \textbf{18}(1) (1954)  18--66

\bibitem{mhve}
Zhang, P., Li, A.:
\newblock Algorithmic aspects of homophyly of networks.
\newblock Theoretical Computer Science \textbf{593} (2015)  117--131

\bibitem{mhveapr}
Zhang, P., Jiang, T., Li, A.:
\newblock {Improved Approximation Algorithms for the Maximum Happy Vertices and
  Edges Problems}.
\newblock In: Proceedings of the 21st Annual International Conference on
  Computing and Combinatorics (COCOON 2015), Beijing, China, August 4--6.
  (2015)  159--170

\bibitem{kmhvelintree}
Aravind, N.R., Kalyanasundaram, S., Kare, A.S.:
\newblock Linear time algorithms for happy vertex coloring problems for trees.
\newblock In: Proceedings of the 27th International Workshop on Combinatorial
  Algorithms (IWOCA 2016), Helsinki, Finland, August 17--19. (2016)  281--292

\bibitem{mwuc2006}
Langberg, M., Rabani, Y., Swamy, C.:
\newblock Approximation algorithms for graph homomorphism problems.
\newblock In: Proceedings of the 9th International Conference on Approximation
  Algorithms for Combinatorial Optimization Problems, and 10th International
  Conference on Randomization and Computation (APPROX-RANDOM 2006), Barcelona,
  Spain, August 28--30. (2006)  176--187

\bibitem{mwc92}
Dahlhaus, E., Johnson, D.S., Papadimitriou, C.H., Seymour, P.D., Yannakakis,
  M.:
\newblock The complexity of multiway cuts (extended abstract).
\newblock In: Proceedings of the 24th Annual ACM Symposium on Theory of
  Computing (STOC 1992), Victoria, British Columbia, Canada, May 4--6. (1992)
  241--251

\bibitem{mwc91}
Chopra, S., Rao, M.R.:
\newblock On the multiway cut polyhedron.
\newblock Networks \textbf{21}(1) (1991)  51--89

\bibitem{parmwcut}
Cao, Y., Chen, J., Fan, J.H.:
\newblock {An $O^*(1.84^k)$ parameterized algorithm for the multiterminal cut
  problem}.
\newblock Information Processing Letters \textbf{114}(4) (2014)  167--173

\bibitem{Diestel2005}
Diestel, R.:
\newblock Graph Theory.
\newblock Springer-Verlag Heidelberg (2005)

\bibitem{fpt-book}
Cygan, M., Fomin, F.V., Kowalik, {\L}., Lokshtanov, D., Marx, D., Pilipczuk,
  M., Pilipczuk, M., Saurabh, S.:
\newblock Parameterized Algorithms.
\newblock Springer (2015)

\bibitem{Koivisto2009}
Bj\"{o}rklund, A., Husfeldt, T., Koivisto, M.:
\newblock {Set Partitioning via Inclusion-Exclusion}.
\newblock SIAM Journal on Computing \textbf{39}(2) (2009)  546--563

\bibitem{Fomin2010}
Fomin, F.V., Kratsch, D.:
\newblock Exact exponential algorithms.
\newblock Springer Science \& Business Media (2010)

\bibitem{Robertson1986}
Robertson, N., Seymour, P.:
\newblock {Graph minors. II. Algorithmic aspects of tree-width}.
\newblock Journal of Algorithms \textbf{7}(3) (1986)  309--322

\bibitem{Bodlaender1996}
Bodlaender, H.L.:
\newblock {A Linear-Time Algorithm for Finding Tree-Decompositions of Small
  Treewidth}.
\newblock {SIAM Journal on Computing} \textbf{25}(6) (1996)  1305/--1317

\bibitem{Bodlaender2008}
Bodlaender, H.L., Koster, A.M.:
\newblock Combinatorial optimization on graphs of bounded treewidth.
\newblock The Computer Journal \textbf{51}(3) (2008)  255--269

\bibitem{Lampis2012}
Lampis, M.:
\newblock Algorithmic meta-theorems for restrictions of treewidth.
\newblock Algorithmica \textbf{64}(1) (2012)  19--37

\bibitem{Ganian2012}
Ganian, R.:
\newblock Using neighborhood diversity to solve hard problems.
\newblock CoRR \textbf{abs/1201.3091} (2012)

\bibitem{Gargano2015}
Gargano, L., Rescigno, A.A.:
\newblock Complexity of conflict-free colorings of graphs.
\newblock Theoretical Computer Science \textbf{566} (2015)  39--49

\bibitem{Fiala2016}
Fiala, J., Gaven{\v{c}}iak, T., Knop, D., Kouteck{\'y}, M., Kratochv{\'i}l, J.:
\newblock {Fixed Parameter Complexity of Distance Constrained Labeling and
  Uniform Channel Assignment Problems}.
\newblock In Dinh, T.N., Thai, M.T., eds.: Proceedings of the 22nd
  International Confence on Computing and Combinatorics (COCOON 2016), Ho Chi
  Minh City, Vietnam, August 2--4. (2016)  67--78

\end{thebibliography}

\end{document}